\definecolor{colorhkust}{RGB}{20,43,140}
\definecolor{colortsinghua}{RGB}{116,52,129}
\definecolor{color1}{RGB}{128,0,0}
\newtheorem{proposition}{Proposition}
\newcommand{\K}{\mathcal{K}}
\newcommand{\N}{\mathcal{N}}
\newcommand{\M}{\mathcal{M}}
\newcommand{\A}{\mathcal{A}}
\newcommand{\B}{\mathcal{B}}
\newcommand{\C}{\mathbb{C}}
\newcommand{\D}{{\textrm{DL}}}
\newcommand{\U}{\sf{\textrm{UL}}}
\newcommand{\V}{\sf{\textrm{VDL}}}
\renewcommand{\H}{\sf{H}}
\begin{document}
\title{Reconfigurable Intelligent Surface for Green Edge Inference}

\author{\IEEEauthorblockN{Sheng Hua,~\IEEEmembership{Student Member,~IEEE,} Yong Zhou,~\IEEEmembership{Member,~IEEE,} Kai Yang,~\IEEEmembership{Student Member,~IEEE,}\\ and Yuanming Shi,~\IEEEmembership{Member,~IEEE}}\\
	\thanks{S. Hua, Y. Zhou, and Y. Shi are with the School of Information Science and Technology, ShanghaiTech University, Shanghai 201210, China (E-mail:\{huasheng, zhouyong, shiym\}@shanghaitech.edu.cn).}
	\thanks{
		K. Yang is with the School of Information Science and Technology, ShanghaiTech University, Shanghai 201210, China, also with the Shanghai Institute of Microsystem and Information Technology, Chinese Academy of Sciences, Shanghai 200050, China, and also with the University of Chinese Academy of Sciences, Beijing 100049, China (E-mail: yangkai@shanghaitech.edu.cn).
	}
	\thanks{This paper will be presented in part at the \textit{IEEE Globecom Workshops}, Waikoloa, Hawaii, Dec. 2019. \cite{hua2019globecom}}
	
}

%


\maketitle

\begin{abstract}
Reconfigurable intelligent surface (RIS) as an emerging cost-effective technology can enhance the spectrum- and energy-efficiency of wireless networks. In this paper, we consider an RIS-aided green edge inference system, where the inference tasks generated from resource-limited mobile devices (MDs) are uploaded to and cooperatively performed at multiple resource-enhanced base stations (BSs). Taking into account both the computation and uplink/downlink transmit power consumption, we formulate an overall network power consumption minimization problem, which calls for the joint design of the set of tasks performed by each BS, transmit and receive beamforming vectors of the BSs, transmit power of the MDs, and uplink/downlink phase-shift matrices at the RIS. Such a problem is a mixed combinatorial optimization problem with nonconvex constraints and is highly intractable. To address the challenge of the combinatorial objective, a group sparse reformulation is proposed by exploiting the group sparsity structure of the beamforming vectors, while a block-structured optimization (BSO) approach is proposed to decouple the optimization variables. Finally, we propose a BSO with mixed $\ell_{1,2}$-norm and difference-of-convex-functions (DC) based three-stage framework to solve the problem, where the mixed $\ell_{1,2}$-norm is adopted to induce the group sparsity of beamforming vectors and DC is adopted to effectively handle the nonconvex rank-one constraint after matrix lifting. Numerical results demonstrate the supreme gain of deploying an RIS and confirm the effectiveness of the proposed algorithm over the baseline algorithms.
\end{abstract}

\begin{IEEEkeywords}
Reconfigurable intelligent surface, joint uplink and downlink, green edge inference, block-structured optimization, mixed $\ell_{1,2}$-norm, difference-of-convex programming
\end{IEEEkeywords}

\section{Introduction}
Benefiting from the availability of big data, recent years have witnessed the rapid prosperity of deep neural network (DNN), which has demonstrated its superiority in a variety of intelligent applications (e.g., computer vision and natural language processing). Meanwhile, with an ever-increasing number of mobile devices (MDs) that will generate 77 exabytes data per month by 2022 \cite{CiscoReport}, the demand of performing inference tasks (e.g., object recognition and machine translation) is anticipated to be ubiquitous, especially in the artificial intelligence (AI)-powered sixth generation (6G) networks \cite{letaief2019roadmap}. Driven by this trend, it is urgent to push traditional cloud-based DNN models to the network edge so as to unleash the potentials of edge data and in turn provide intelligent services \cite{chen2019deeplearning}-\cite{zhang2019iov}. One possible architecture to perform inference tasks is on-device inference, i.e., running DNN models directly on MDs. While the model compression \cite{liu2018demand}, model selection \cite{taylor2018adaptive}, and hardware acceleration \cite{du2015shidiannao} have been proposed as promising techniques to help MDs run DNN models, deploying powerful models with real-time execution requirements is still challenging because of the resource limitations of MDs \cite{chen2019deeplearning}.



By leveraging edge computing \cite{shi2016edge} and deploying DNN models at the edge base stations (BSs) that have strong computational capacity and large storage resources, edge inference stands out as a promising paradigm to provide intelligent services for MDs \cite{zhou2019edge}.  To accomplish the inference tasks, the MDs upload the task-specific data to the BSs and subsequently the BSs deliver the inference results after finishing the inference process. Tailored for latency-critical applications, the authors proposed device-edge \cite{li2019edge} and edge-cloud \cite{hu2019dynamic} synergy frameworks to partition DNN model parameters based on network dynamics so as to minimize the execution latency. As energy efficiency is a key performance indicator for edge inference systems \cite{zhou2019edge}, the authors in \cite{cvpr_2017_yang_energy}-\cite{louizos2017bayesian} proposed energy-aware approaches to prune DNN models so as to minimize the computation power consumption (i.e., power required for the BSs to perform the inference tasks) while maintaining reasonable inference precision. However, the communication power consumption is not considered in \cite{cvpr_2017_yang_energy}-\cite{louizos2017bayesian}. The authors in \cite{yang2019energy} proposed to minimize the sum of computation power and downlink transmit power consumption (i.e., power required for the BSs to deliver the tasks results to the MDs), while the uplink transmit power (i.e., power required for the MDs to upload data to the BSs) is neglected. However, the uplink traffic load (e.g., raw images for an object recognition task) is usually comparable to the downlink one (e.g., labeled images) in edge inference systems, resulting in high uplink transmit power consumption. Therefore, it is imperative to develop new techniques to reduce both the uplink and downlink transmit power consumption and in turn facilitate an energy-efficient design for edge inference systems.


Recently, a growing line of works focused on an emerging technology named reconfigurable intelligent surface (RIS \cite{basar2019wireless}, or reconfigurable meta-surface \cite{di2019smart}, hypersurface \cite{liaskos2018new}), which has the potential to significantly reduce the power consumption \cite{wu2018intelligent}
and improve the energy efficiency \cite{huang2019reconfigurable}. An RIS is a low-cost planar array consisting of a large number of passive reflecting elements with reconfigurable phase shifts, each of which can be dynamically tuned via a software controller to reflect the incident signals \cite{nadeem2019asymptotic}-\cite{huang2019holographic}. These elements consume negligible energy due to their passive nature. By adaptively adjusting the phase shifts of reflecting elements, an RIS can combine the constructive signals and suppress the interference, thereby greatly enhancing the performance of wireless systems \cite{di2019smart}-\cite{liaskos2018new}. By jointly optimizing the beamforming vectors and the phase-shift matrix, deploying an RIS has been shown as an effective way to reduce the power consumption in various applications, e.g., downlink unicast \cite{wu2018intelligent} and broadcast \cite{han2019intelligent} settings, non-orthogonal multiple access \cite{fu2019Intelligent}-\cite{li2019joint}, and simultaneous wireless information and power transfer \cite{wu2019joint}. All the previous works only considered the downlink transmit power consumed by the BSs. However, in edge inference systems, the computation power consumption is an indispensible component and should be taken into account to accurately characterize the overall network power consumption. In addition, it is essential to optimize both the uplink and downlink phase-shift matrices of the RIS to assist both the uplink and downlink data transmissions. These two key issues make the approaches proposed in the existing works not applicable to our work.

To guarantee the quality of intelligent services provided for MDs, we explore the idea of computation replication \cite{li2019exploiting}, which allows inference tasks to be performed by multiple BSs to create multiple copies of the inference results at different BSs. These copies enable cooperative downlink transmission among the BSs on delivering inference results. In terms of power consumption, however, cooperative transmission and computation replication conflict with each other. Specifically, cooperative transmission reduces downlink transmit power consumption by exploiting a higher beamforming gain, while computation replication rapidly increases the computation power consumption because of repeatedly running DNN models for multiple times. Therefore, we should strike a balance between the computation and communication power consumption via selecting inference tasks performed by each BS and in turn achieve green edge inference.

In this paper, we consider an RIS-aided green edge inference system with multiple BSs cooperatively performing inference tasks for multiple MDs, taking into account both uplink and downlink transmit power consumption as well as the computation power consumption.  Our objective is to minimize the overall network power consumption subject to prescribed quality-of-service (QoS) requirements, by jointly designing the task selection strategy, transmit/receive beamforming vectors of the BSs, the transmit power of the MDs, and the uplink/downlink phase-shift matrices at the RIS. However, the formulated problem is a mixed combinatorial optimization problem with nonconvex constraints, which is highly intractable.

\subsection{Contributions}
The main contributions of this paper are summarized as follows.
\begin{itemize}
	\item We propose a joint design of the task selection strategy, transmit/receive beamforming vectors, transmit power, and uplink/downlink phase-shift matrices for an RIS-aided green edge inference system. To the best of our knowledge, this is the first attempt to unify beamforming vectors, transmit power, and phase shifts design in both the uplink and downlink transmissions into a general framework.
	\item The combinatorial nature of the task selection strategy and the coupled optimization variables stand out as two major challenges. We address the challenge of the combinatorial variable by exploiting the group sparsity structure of the beamforming vectors, and tackle the challenge of the coupled variables by proposing a block-structured optimization (BSO) approach.
	\item With fixed phase shifts, we adopt the weighted mixed $\ell_{1,2}$-norm to induce the group sparsity of beamforming vectors. With fixed beamforming vectors and transmit power, the original problem is transformed to a homogeneous quadratically constrained quadratic programming (QCQP) with a nonconvex rank-one constraint. As the widely adopted semidefinite relaxation (SDR) technique incurs performance degradation when the number of reflecting elements is large, we propose a novel difference-of-convex-functions (DC) representation for this nonconvex constraint, followed by proposing an effective DC algorithm. We then propose a BSO with mixed $\ell_{1,2}$-norm and DC based three-stage framework to solve the original problem.
	\item Through extensive simulations, we show that the deployment of an RIS can significantly reduce the overall network power consumption. Furthermore, the proposed BSO with mixed $\ell_{1,2}$-norm and DC algorithm achieves a significant performance improvement compared to BSO with mixed $\ell_{1,2}$-norm and SDR algorithm, which demonstrates the effectiveness of DC in yielding the rank-one solutions.
\end{itemize}

\subsection{Organization and Notations}
\textit{Organization:} The remainder of this paper is organized as follows. We present the system model and problem formulation in Section \ref{sec: system model}. A BSO approach is developed in Section \ref{sec: alternating approach}. Based on mixed $\ell_{1,2}$-norm and DC approach, we propose a three-stage framework in Section \ref{sec: Framework}. Simulation results are illustrated in Section \ref{sec: simulation}. Finally, Section \ref{sec: conclusion} concludes this paper.

\textit{Notations:} We use boldface lower-case (e.g., $\bm{h}$) and upper-case letters (e.g., $\bm{G}$) to represent vectors and matrices, respectively. The transpose, conjugate transpose, trace operator and diagonal matrix are denoted as $(\cdot)^{{\sf{T}}},(\cdot)^{{\H}},\mathrm{Tr}(\cdot)$ and  $\mathrm{diag}(\cdot)$, repectively. The symbols $|\cdot|$ and $\mathfrak{R}(\cdot)$ denote the modulus and the real component of a complex number. The $n\times n$ identity matrix is denoted as $\bm{I}_n$. The complex normal distribution is denoted as $\mathcal{CN}$. The inner product of two matrices $\bm{X}$ and $\bm{Y}$ is denoted as $\left\langle \bm{X}, \bm{Y} \right\rangle$, which is defined as $\left\langle \bm{X}, \bm{Y} \right\rangle = \mathrm{Tr}(\bm{X}^{\H}\bm{Y})$. The $\ell_2$-norm of a vector is denoted as $\|\cdot\|_2$. The spectral norm and Frobenius norm of a matrix are denoted as $\|\cdot\|$ and $\|\cdot\|_{F}$, respectively. The $i$-th largest singular value of matrix $\bm{X}$ is denoted as $\sigma_{i}(\bm{X})$. We use $\bm{1}_{\{\cdot\}}$ to denote the indicator function which outputs $1$ if the condition $\cdot$ is satisfied, and outputs $0$ otherwise. In the rest of this paper, the superscripts $\textrm{UL}$ and $\textrm{DL}$ refer to uplink and downlink, respectively, and the letters $\textrm{d}$ and $\textrm{r}$ in the subscripts stand for the \textit{direct link} and the \textit{reflected link}, respectively.

\section{System Model and Problem Formulation}\label{sec: system model}
\begin{figure}[!t]
	\centering
	\includegraphics[width=\linewidth]{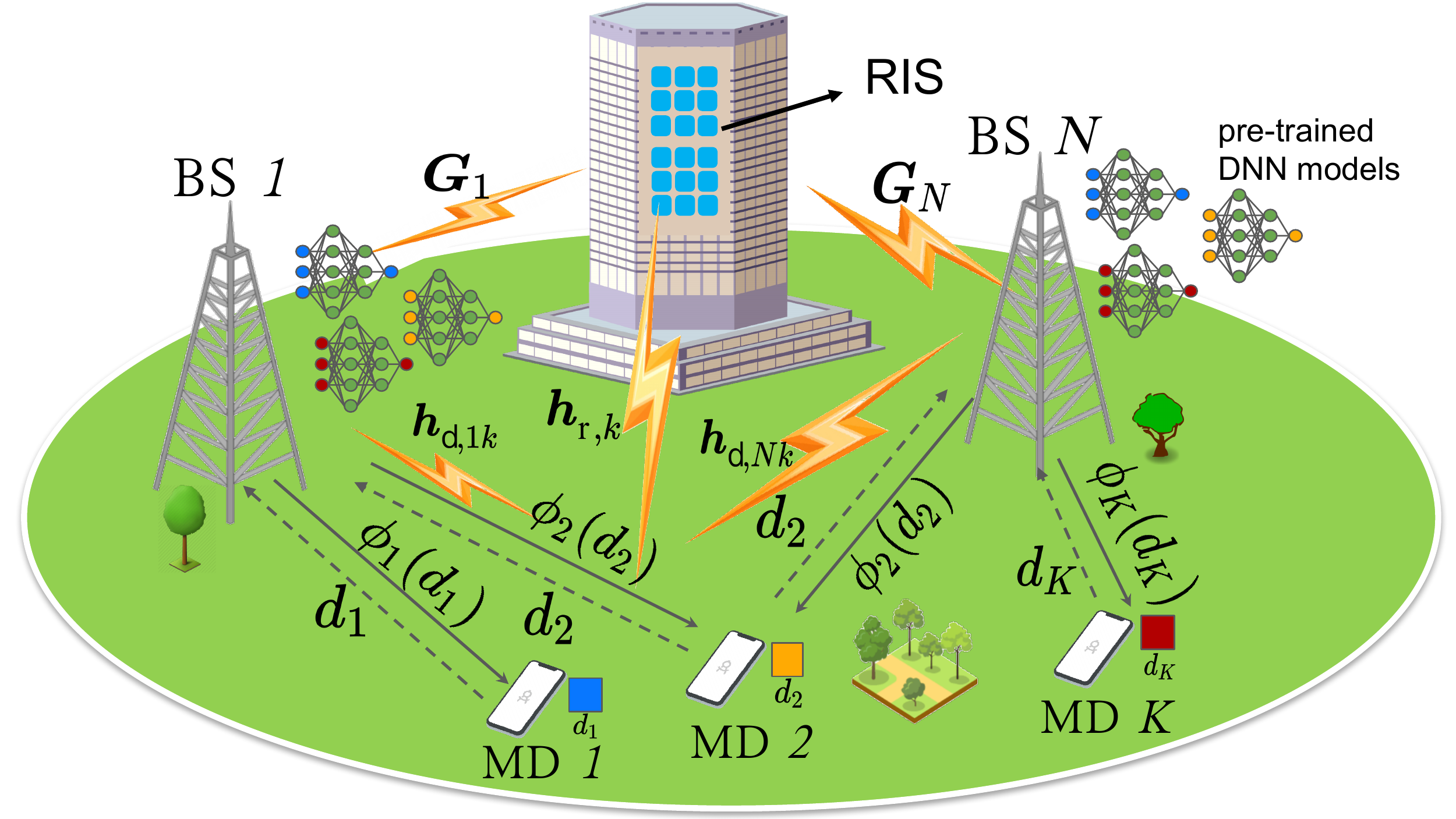}
	\caption{A reconfigurable intelligent surface (RIS)-aided edge inference system with $N$ base stations (BSs) collaboratively serving $K$ mobile devices (MDs). The RIS is deployed on the facade of a building.}
	\label{fig:sys model}                                                                    
\end{figure}
In this section, we describe the system model and the power consumption model for performing inference tasks at the network edge, followed by an overall network power minimization formulation for green edge inference while guaranteeing the quality of intelligent services.

\subsection{System Model}
We consider an RIS-aided edge computing system, where $N$ $L_n$-antenna BSs distributed in a small-cell network collaboratively serve $K$ single-antenna MDs with the assistance of an $M$-element RIS deployed on the facade of a building, as shown in Fig. \ref{fig:sys model}. Let $\mathcal{N}=\{1,\dots,N\}$,  $\mathcal{K}=\{1,\dots,K\}$, and $\mathcal{M}=\{1,\dots,M\}$ denote the index sets of BSs, MDs, and reflecting elements, respectively.
The BSs are resource-enhanced with strong computation and storage capabilities \cite{mao2017survey}.
Each MD has an inference task (e.g., image recognition) to be processed by a task-specific DNN model (e.g., AlexNet \cite{krizhevsky2012imagenet}). Specifically, the DNN model denoted as $\phi_k$ takes MD $k$'s local data $d_k$ (e.g., raw images) as input and generates the inference result $\phi_k(d_k)$ (e.g., labeled images) as output. As it is infeasible to run DNN models on resource-constrained MDs, we in this paper propose to perform inference tasks for the MDs at the BSs.  We assume that all the BSs have downloaded the pre-trained DNN models from cloud servers in advance \cite{yang2019energy}.



The overall process of accomplishing the inference tasks in the edge computing system is composed of the following three phases.
\begin{itemize}
	\item \textit{Uplink Transmission:} The MDs upload the collected input data $\{d_k,\,k\in\K\}$ to the BSs.
	\item \textit{Inference Computation:} The BSs feed data (e.g., $d_k$) into a specific pre-trained DNN model (e.g., $\phi_k$) according to the task type and then obtain the inference results (e.g., $\phi_k(d_k)$).
	\item \textit{Downlink Transmission: }The BSs deliver the inference results $\{\phi_k(d_k),\,k\in\K\}$ to the corresponding MDs.
\end{itemize}
By exploiting the broadcast nature of wireless channels, each MD's data can be successfully received by multiple BSs in the uplink, which enables the computation replication and creates multiple copies of the inference results at different BSs \cite{li2019exploiting}. In the downlink, the BSs performing the same inference task cooperatively transmit the inference results to the corresponding MD \cite{gesbert2010multi}. To enable transmission cooperation among the BSs, it is assumed that the global channel state information (CSI) is available at the BSs.
Let $\A_n\subseteq\K$ denote the set of MD indices whose inference tasks are selectively performed by BS $n$. For notational ease, we denote $\A=\left(\A_1,\dots,\A_N\right)$ as the task selection strategy.

\subsubsection{Uplink Transmission}
Let $s_k^{\U}\in\C$ denote the representative information symbol of input data $d_k$, and $p_k^{\U}\in\mathbb{R}$ denote the transmit power of MD $k$. Without loss of generality, $\{s_k^{\U}, k\in\K\}$ are assumed to have zero mean and unit power.
The signal received at BS $n$ can be expressed as 
\begin{eqnarray}\label{eq: recv signal}
\bm{y}_n^{\U} = \sum_{k\in\K}\bm{g}_{nk}^{\U} \sqrt{p_k^{\U}}s_k^{\U} + \bm{z}_n^{\U}, 
\end{eqnarray}
where $\bm{g}_{nk}^{\U}\in\C^{ L_{n}\times 1}$ is the equivalent baseband channel response from MD $k$ to BS $n$ and $\bm{z}_n^{\U}\sim\mathcal{CN}\left(\bm{0},\sigma_n^2\bm{I}_{L_n}\right)$ is the additive white Gaussian noise (AWGN) at BS $n$ with $\sigma_n^2$ being the noise power. With the deployment of an RIS, the equivalent baseband channel from MD $k$ to BS $n$ consists of both the direct link and the reflected link, where the reflected link is a concatenation of the MD-RIS link, the phase shifts at the RIS, and the RIS-BS link \cite{wu2018intelligent}, \cite{han2019intelligent}-\cite{wu2019joint}. Therefore, $\bm{g}_{nk}^{\U}$ can be modeled as
\begin{eqnarray} \label{eq: uplink channel}
\bm{g}_{nk}^{\U}=\underbrace{\bm{h}_{\textrm{d},nk}^{\U}}_{\textrm{direct link}}+~ \underbrace{\left(\bm{G}^{\U}_{n}\right)^{\H}\left(\bm{\Theta}^{\U}\right)^{\H}\bm{h}_{\textrm{r},k}^{\U}}_{\textrm{reflected link}},
\end{eqnarray}
where $\bm{h}_{\textrm{d},nk}^{\U}\in\mathbb{C}^{L_n \times 1}$, $\bm{h}_{\textrm{r},k}^{\U}\in \C^{M\times 1}$, and $\bm{G}_n^{\U}\in \C^{M\times L_n}$ denote the channel responses from MD $k$ to  BS $n$, from MD $k$ to the RIS, and from the RIS to BS $n$, respectively. In addition, $\bm{\Theta}^{\U}=\beta\textrm{diag}\left( \theta_1^{\U},\dots, \theta_M^{\U}\right)\in\C^{M\times M}$ denotes the diagonal phase-shift matrix for uplink transmission, where $\beta\in\left[0,1\right]$ is the amplitude reflection coefficient and $\theta_m^{\U}=e^{j\varphi_m^{\U}}$ with  $\varphi_m^{\U}\in\left[0,2\pi\right)$ being the uplink phase shift of the $m$-th reflecting element of the RIS. The reflected link only accounts for one-time reflection, because the power of signals reflected by two or more times is negligible due to the high path loss \cite{wu2018intelligent}, \cite{han2019intelligent}-\cite{wu2019joint}.

We consider the linear beamforming strategy, and denote the receive beamforming vector of BS $n$ to decode $s_k^{\U}$ as $\bm{v}_{nk}^{\U} \in \C^{L_n \times 1}$. BS $n$ only decodes MD $k$'s transmitted symbol $s_k^{\U}$ if $k\in\A_n$. The estimated symbol at BS $n$ for MD $k\in\A_n$, denoted by $\hat{s}_{nk}^{\U}\in\mathbb{C}$, is given by
\begin{eqnarray}
\hspace{-2em}&&\hat{s}_{nk}^{\U}=\left(\bm{v}_{nk}^{\U}\right)^{{\H}}\bm{y}_n^{\U}
= \left(\bm{v}_{nk}^{\U}\right)^{{\H}}\bm{g}_{nk}^{\U}\sqrt{p_k^{\U}}s_k^{\U} \nonumber \\
\hspace{-2em}&&\quad\quad~~+ \left(\bm{v}_{nk}^{\U}\right)^{{\H}}\sum_{l\neq k}\bm{g}_{nl}^{\U} \sqrt{p_l^{\U}}s_l^{\U} +  \left(\bm{v}_{nk}^{\U}\right)^{{\H}}\bm{z}_{n}^{\U}.
\end{eqnarray}

Therefore, the uplink signal-to-interference-plus-noise ratio (SINR) observed at BS $n$ for MD $k\in\A_n$ is 
\begin{eqnarray}\label{eq: uplink sinr}
\textrm{SINR}_{nk}^{\U} = \frac{p_k^{\U}\left|\left(\bm{v}_{nk}^{\U}\right)^{{\H}}\bm{g}_{nk}^{\U}\right|^2}{\sum_{l\neq k}p_l^{\U}\left|\left(\bm{v}_{nk}^{\U}\right)^{{\H}}\bm{g}_{nl}^{\U}\right|^2+\sigma_n^2\left\|\bm{v}_{nk}^{\U}\right\|_2^2}.
\end{eqnarray}

\subsubsection{Downlink Transmission}
After performing the inference tasks, the BSs cooperatively transmit the inference results $\{\phi_k(d_k), k\in\K\}$ to the corresponding MDs through downlink wireless channels.
Let $s_k^{\D}\in\C$ denote the representative symbol of $\phi_k(d_k)$ intended for MD $k$ and $\bm{v}_{nk}^{\D}$ denote the downlink beamforming vector from BS $n$ to MD $k$. Without loss of generality, $\{s_k^{\D},~k\in\K\}$ are assumed to have zero mean and unit power. The signal transmitted by BS $n$, denoted as $\bm{x}_n^{\D}\in \C^{L_n\times 1}$, is a summation of beamformed symbols for MD $k\in\A_n$, i.e.,
\begin{eqnarray}
\bm{x}_n^{\D} = \sum_{k\in\A_n}\bm{v}_{nk}^{\D} s_k^{\D},~\forall\, n\in\N.
\nonumber
\end{eqnarray}
The signal received by MD $k$ can be expressed as
\begin{eqnarray}\label{eq: downlink signal}
&&\hspace{-1.8em}y_k^{\D} =\sum_{n\in\N}\left(\bm{g}_{nk}^{\D}\right)^{\H}\bm{x}_n^{\D}+z_k^{\D}\nonumber \\
&&=\sum_{n\in\N}\left(\bm{g}_{nk}^{\D}\right)^{\H}\!\left(\bm{1}_{\{k\in\A_n\}}\bm{v}_{nk}^{\D} s_k^{\D}\!+\sum_{\underset{l\in\A_n}{l\neq k,}}\bm{v}_{nl}^{\D} s_l^{\D}\right)+z_k^{\D}
\nonumber \\ && =\sum_{n\in\N}\bm{1}_{\{k\in\A_n\}}\left(\bm{g}_{nk}^{\D}\right)^{\H}\bm{v}_{nk}^{\D} s_k^{\D} \nonumber \\
&&\quad+ \sum_{l\neq k}\sum_{n\in\N}\bm{1}_{\{l\in\A_n\}}\left(\bm{g}_{nk}^{\D}\right)^{\H}\bm{v}_{nl}^{\D}s_l^{\D}+z_k^{\D},
\end{eqnarray}
where $z_k^{\D}\in\mathbb{C}$ is the AWGN at MD $k$ with zero mean and power $\sigma_{k}^2$, and $\left(\bm{g}_{nk}^{\D}\right)^{\H}\in\mathbb{C}^{1\times L_n}$ is the equivalent downlink channel response from BS $n$ to MD $k$. Similar to the uplink counterpart, $\left(\bm{g}_{nk}^{\D}\right)^{\H}$ can be modeled as
\begin{eqnarray}\label{eq: downlink channel}
\left(\bm{g}_{nk}^{\D}\right)^{\H} = \underbrace{\left(\bm{h}_{\textrm{d},nk}^{\D}\right)^{\H}}_{\textrm{direct link}}+~ \underbrace{\left(\bm{h}_{\textrm{r},k}^{\D}\right)^{\H}\bm{\Theta}^{\D}\bm{G}^{\D}_{n}}_{\textrm{reflected link}},
\end{eqnarray}
where $\left(\bm{h}_{\textrm{d},nk}^{\D}\right)^{\H}\in\mathbb{C}^{1\times L_n }, \left(\bm{h}_{\textrm{r},k}^{\D}\right)^{\H}\in\mathbb{C}^{1 \times M}, ~\textrm{and}~ \bm{G}^{\D}_{n}\in\mathbb{C}^{M\times L_n}$ denote the channel responses from BS $n$ to MD $k$, from the RIS to MD $k$, and from BS $n$ to the RIS, respectively, and $\bm{\Theta}^{\D}=\beta \textrm{diag}\left(\theta_1^{\D},\dots, \theta_M^{\D} \right)\in\C^{M\times M}$ is the downlink phase-shift matrix with diagonal entries $\theta_m^{\D}=e^{j\varphi_m^{\D}}$ and $\varphi_m^{\D}\in\left[0,2\pi\right)$.


Based on \eqref{eq: downlink signal}, the SINR observed by MD $k\in\K$ in the downlink is given by
\begin{eqnarray}\label{eq:sinr}
\textrm{SINR}_k^{\D}\!\left(\A\right) =\! \frac{\left|\sum_{n\in\N}\bm{1}_{\{k\in\A_n\}}\!\left(\bm{g}_{nk}^{\D}\right)^{\H}\bm{v}_{nk}^{\D}\right|^2}{\sum_{l\neq k}\left|\sum_{n\in\N}\bm{1}_{\{l\in\A_n\}}\!\left(\bm{g}_{nk}^{\D}\right)^{\H}\bm{v}_{nl}^{\D}\right|^2 \!+\! \sigma_{k}^2}.
\end{eqnarray}

\subsection{Power Consumption Model}
As running DNN models often incurs high energy consumption due to their high computational complexity \cite{yang2017method, xu2018scaling}, and energy-efficiency is one of the key performance indicators for green communications \cite{mahapatra2015energy}, in this subsection, we present the power consumption model in the proposed edge inference system, taking into consideration both the computation power for inference and the communication power for uplink and downlink transmissions.

\subsubsection{Computation Power Consumption}  We denote the power consumption of performing MD $k$'s inference task at BS $n$ as $P_{nk}^{\sf{c}}$. As a result, the total computation power consumption at all BSs is given by 
\begin{eqnarray}
P_{\textrm{comp}}(\A) = \sum_{n\in\N}\sum_{k\in\A_n}P_{nk}^{\sf{c}}.
\end{eqnarray}

It is worth noting that the majority of the computation power is consumed for running DNN models, and it can be estimated by using the energy estimation methodology proposed in \cite{yang2017method}. This methodology provides a layer-wise energy breakdown for arbitrary neural networks. In particular, the DNN configurations (e.g., number of filters, number of input feature maps) are taken as inputs and the normalized layer-wise consumptions energy of the neural network (i.e., the energy consumption per multiply-and-accumulation (MAC) operation) are generated as outputs \cite{EstimationWebsite}. The computation time can be calculated via dividing the number of MAC operations by the average throughput of a CPU chip. Therefore, the power consumption for performing an inference task equals to the total energy consumption divided by the corresponding computation time.

For example, the energy consumption of running a widely used DNN model AlexNet to process one image on a well-designed energy-efficient Eyeriss chip can be approximated by that of performing $4\times10^{9}$ MAC operations, and the typical value of the power consumption for processing an image using AlexNet on a Eyeriss Chip at core supply voltage $1.2\,$V is $0.45\,$W \cite{chen2016eyeriss}. As the typical value of computation power $P_{nk}^{\sf{c}}$ (e.g., $0.45\,$W) is comparable to the BSs' transmit power (e.g., $1\,$W \cite{hwang2013holistic}), it is necessary to take into account both the computation and transmit power to facilitate the energy-efficient design.

\subsubsection{Communication Power Consumption} The communication power consumption consists of the power consumed by the MDs in the uplink transmission and by the BSs in the downlink transmission. According to \eqref{eq: uplink channel}-\eqref{eq:sinr}, the total uplink transmit power consumption is $\sum_{k\in\K}p_k^{\U}$, while the downlink transmit power consumption of BS $n$ is given by
\begin{eqnarray}
\mathbf{E}\left[\sum_{k\in\A_n} \left\|\bm{v}_{nk}^{\D}s_k^{\D} \right\|_2^2\right] = \sum_{k\in\A_n} \left\|\bm{v}_{nk}^{\D} \right\|_2^2, \nonumber
\end{eqnarray}
where $\mathbf{E}\left[\cdot\right]$ denotes the expectation. Therefore, the total communication power consumption for both uplink and downlink transmissions is given by 
\begin{eqnarray}
P_{\textrm{comm}}(\A,\{p_k^{\U}\},\{\bm{v}_{nk}^{\D}\}) = \sum_{k\in\K}p_k^{\U} +  \sum_{n\in\N}\sum_{k\in\A_n}\frac{1}{\eta_n} \left\|\bm{v}_{nk}^{\D} \right\|_2^2, \nonumber
\end{eqnarray}
where $\eta_n$ is the drain efficiency coefficient of the radio frequency power amplifier of BS $n$.

The overall network power consumption, consisting of both the computation power and communication power, can be expressed as
\begin{eqnarray}\label{eq:total power}
&&\hspace{-2.0em}P_{\textrm{total}}(\A,\{p_k^{\U}\},\{\bm{v}_{nk}^{\D}\}) = P_{\textrm{comm}}(\A,\{p_k^{\U}\},\{\bm{v}_{nk}^{\D}\}) \!+\! P_{\textrm{comp}}(\A) \nonumber \\
&&\hspace{-1.8em}= \sum_{k\in\K}p_k^{\U} +  \sum_{n\in\N} \sum_{k\in\A_n}\frac{1}{\eta_n} \left\|\bm{v}_{nk}^{\D} \right\|_2^2 +\sum_{n\in\N}\sum_{k\in\A_n}P_{nk}^{\sf{c}}.
\end{eqnarray}

\subsection{Problem Formulation and Analysis}
In the proposed edge inference system, there exists a fundamental tradeoff between communication and computation power consumption. Specifically, with computation replication, more BSs performing the same task can reduce the downlink transmission power consumption by exploiting a higher cooperative beamforming gain, at the cost of increasing the computation power consumption. Therefore, we propose to achieve green edge inference by minimizing the overall network power consumption via striking a balance between the communication and computation power consumption.

Let $\{\gamma_k^{\U},k\in\K\}$ and $\{\gamma_k^{\D},k\in\K\}$ denote the SINR thresholds required to successfully receive the input data and inference results in the uplink and downlink, respectively. The overall network power consumption minimization problem is formulated as 
\begin{subequations}
	\begin{eqnarray}\label{prob:original problem}
	\hspace{-2.1em}&\mathscr{P}_\textrm{original}:\underset{\{\bm{v}_{nk}^{\U}\}, \{p_k^{\U}\}, \bm{\Theta}^{\U}}{\underset{\A,\{\bm{v}_{nk}^{\D}\},\bm{\Theta}^{\D},}{\mathrm{minimize}}} &P_{\textrm{total}}(\A,\{p_k^{\U}\},\{\bm{v}_{nk}^{\D}\}) \nonumber\\
	\hspace{-5em}&\quad\quad\quad\quad\mathrm{subject~to} &\hspace{-1.2em}\textrm{SINR}_k^{\D}\left(\A\right) \geq \gamma_k^{\D}, ~~\forall\,k\in\K, \label{cons:sinr_downlink_original}\\
	\hspace{-5em}& &\hspace{-2.5em}\textrm{SINR}_{nk}^{\U} \geq \gamma_k^{\U}, \!\forall\,k\in\A_n, n\in\N,~ \label{cons:sinr_uplink_original}\\
	\hspace{-5em}& \quad &\hspace{-2.5em}\sum_{k\in\A_n} \left\|\bm{v}_{nk}^{\D} \right\|_2^2 \leq P_{n,\textrm{max}}^{\D},~\forall\,n\in\N,\, \\
	\hspace{-5em}& \quad \quad \quad \quad \quad &p_k^{\U} \leq P_{k,\textrm{max}}^{\U}, ~~\forall\,k\in\K, \\
	\hspace{-5em}& \quad \quad \quad \quad \quad & |\theta_m^{\D}|=1, ~~\forall\,m\in\M, \label{cons:phase shift_downlink}\\
	\hspace{-5em}& \quad \quad \quad \quad \quad & |\theta_m^{\U}|=1, ~~\forall\,m\in\M,  \label{cons:phase shift_uplink}
	\end{eqnarray}
\end{subequations}
where $P_{k,\textrm{max}}^{\U}$ denotes the maximum transmit power of MD $k$ in the uplink and $P_{n,\textrm{max}}^{\D}$ denotes the maximum transmit power of BS $n$ in the downlink.

In order to solve problem $\mathscr{P}_\textrm{original}$, we are confronted with several main challenges. Problem $\mathscr{P}_{\textrm{original}}$ is a mixed combinatorial optimization problem with nonconvex constraints, which is highly intractable. Besides the troublesome combinatorial variable $\A$, the coupled continuous variables phases shifts and beamforming vectors in constraints \eqref{cons:sinr_downlink_original}-\eqref{cons:sinr_uplink_original} pose a unique challenge. Moreover, the unit-modulus constraints \eqref{cons:phase shift_downlink}-\eqref{cons:phase shift_uplink} imposed by the phase-shift of each RIS element are nonconvex. In the following, we shall propose a group sparse reformulation for the overall network power minimization problem, so as to get rid of the combinatorial variable $\A$ in $\mathscr{P}_\textrm{original}$ and thereby facilitating efficient algorithm design.


\subsection{Group Sparse Reformulation}
By exploiting the intrinsic connection between the task selection (i.e., $\A$) and the group sparsity structure of beamforming vector $\bm{v}_{nk} = \left[\left(\bm{v}_{nk}^{\U}\right)^{{\sf{T}}}, \left(\bm{v}_{nk}^{\D}\right)^{{\sf{T}}}\right]^{{\sf{T}}}$, the combinatorial variable $\A$ can be eliminated. Specifically, if $k\notin\A_n$, BS $n$ does not decode MD $k$'s data in the uplink (i.e., $\bm{v}_{nk}^{\U}=\bm{0}$) and subsequently cannot transmit inference results to MD $k$ in the downlink (i.e., $\bm{v}_{nk}^{\D}=\bm{0}$). In other words, all coefficients in the beamforming group $\bm{v}_{nk}$ are zero simultaneously if $k\notin\A_n$. As multiple tasks may not be performed by a certain BS, the aggregated beamforming vector $\bm{v}\in\C^{K\sum_{n=1}^N L_n}$ defined as
$
\bm{v} = \left[\bm{v}_{11}^{{\sf{T}}},\dots,\bm{v}_{1K}^{{\sf{T}}},\dots, \bm{v}_{N1}^{{\sf{T}}},\dots,\bm{v}_{NK}^{{\sf{T}}}\right]^{{\sf{T}}}
$
is expected to have the group sparsity structure with only a few non-zero blocks.

The above discussions indicate that the task selection strategy $\A$ can be determined by the group sparsity pattern of the beamforming groups, i.e., $\A_n=\{k|\bm{v}_{nk}\neq\bm{0},~k\in\K\},~\forall\,n\in\N$. Therefore, the overall network power consumption \eqref{eq:total power} can be equivalently rewritten as 
\begin{eqnarray}\label{eq: power}
&&\hspace{-3em}P_{\textrm{total}}(\{p_k^{\U}\}, \{\bm{v}_{nk}^{\D}\}) = \sum_{k=1}^{K}p_k^{\U} +  \sum_{n=1}^{N}\sum_{k=1}^{K}\frac{1}{\eta_n} \left\|\bm{v}_{nk}^{\D} \right\|_2^2  \nonumber \\
&&\hspace{5em}\quad~+\sum_{n=1}^{N}\sum_{k=1}^{K}\bm{1}_{\{\bm{v}_{nk}=\bm{0}\}}P_{nk}^{\sf{c}}.
\end{eqnarray}
In addition, the $\textrm{SINR}_{nk}^{\U}$ expression in \eqref{eq: uplink sinr} can be rewritten as 
\begin{eqnarray}
\textrm{SINR}_{nk}^{\U} = \frac{p_k^{\U}\left|\left(\bm{v}_{nk}^{\U}\right)^{{\H}}\bm{g}_{nk}^{\U}\right|^2}{\underset{l\neq k}{\sum}p_l^{\U}\left|\left(\bm{v}_{nk}^{\U}\right)^{{\H}}\bm{g}_{nl}^{\U}\right|^2\!+\!\sigma_n^2\left\|\bm{v}_{nk}^{\U}\right\|_2^2},~\forall\,n, \forall\,k\in\K,~\nonumber
\end{eqnarray}
where $\bm{v}_{nk}^{\U}=\bm{0}$ if $k\notin\A_n$, and $\textrm{SINR}_k^{\D}$ expression in \eqref{eq:sinr} can be simplified as
\begin{eqnarray}
\textrm{SINR}_k^{\D} = \frac{\left|\sum_{n=1}^{N}\left(\bm{g}_{nk}^{\D}\right)^{\H}\bm{v}_{nk}^{\D}\right|^2}{\sum_{l\neq k}\left|\sum_{n=1}^{N}\left(\bm{g}_{nk}^{\D}\right)^{\H}\bm{v}_{nl}^{\D}\right|^2 + \sigma_{k}^2}, ~~\forall\,k, \label{sinr_downlink}
\end{eqnarray}
where $\bm{v}_{nk}^{\D}=\bm{0}$ if $k\notin\A_n$.
Problem $\mathscr{P}_{\textrm{original}}$ is then equivalent to the following group sparse beamforming problem
\begin{subequations}
	\begin{eqnarray}
	\hspace{-2.5em}&\mathscr{P}:\hspace{0em}\underset{\{\bm{v}_{nk}^{\U}\}, \{p_k^{\U}\}, \bm{\Theta}^{\U}}{\underset{\{\bm{v}_{nk}^{\D}\},\bm{\Theta}^{\D},}{\mathrm{minimize}}} &\hspace{-0.8em}P_{\textrm{total}}(\{p_k^{\U}\}, \{\bm{v}_{nk}^{\D}\}) \nonumber\\
	\hspace{-2.5em}&\quad\quad\mathrm{subject~to} &\hspace{-0.8em}\textrm{SINR}_k^{\D} \geq \gamma_k^{\D}, ~~\forall\,k\in\K,  \label{cons:sinr_downlink}\\
	\hspace{-3em}& &\hspace{-0.8em}\textrm{SINR}_{nk}^{\U} \geq \gamma_k^{\U}, ~~\forall\,n\in\N,~k\in\K,  \label{cons:sinr_uplink}\\
	\hspace{-3em}& &\hspace{-0.8em}\sum_{k=1}^{K} \left\|\bm{v}_{nk}^{\D} \right\|_2^2 \leq P_{n,\textrm{max}}^{\D},~\forall\,n\in\N, \label{cons:power_downlink}\\
	\hspace{-3em}& \quad \quad \quad \quad \quad &\hspace{-0.8em}p_k^{\U} \leq P_{k,\textrm{max}}^{\U}, ~~\forall\,k\in\K, \label{cons:power_uplink}\\
	\hspace{-3em}& \quad \quad \quad \quad \quad & \hspace{-0.8em}|\theta_m^{\D}|=1, ~~\forall\,m\in\M, \label{cons:phase_downlink} \\
	\hspace{-3em}& \quad \quad \quad \quad \quad &\hspace{-0.8em} |\theta_m^{\U}|=1, ~~\forall\,m\in\M,  \label{cons:phase_uplink}
	\end{eqnarray}
\end{subequations}
where we aim to induce the group sparsity of beamforming vector $\bm{v}$. The equivalence means that if $\left(\{{\bm{v}}_{nk}^{\U}\}, \{{p}_k^{\U}\},\{{\bm{v}}_{nk}^{\D}\},{\bm{\Theta}}^{\U},{\bm{\Theta}}^{\D}\right)$ is a solution to problem $\mathscr{P}$, then $\left({\A}, \{{\bm{v}}_{nk}^{\U}\}, \{{p}_k^{\U}\},\{{\bm{v}}_{nk}^{\D}\},{\bm{\Theta}}^{\U},{\bm{\Theta}}^{\D}\right)$ with ${\A}_n=\{k|{\bm{v}}_{nk}\neq\bm{0},~k\in\K\},~\forall\,n\in\N$ is also a solution to problem $\mathscr{P}_{\textrm{original}}$ achieving the same objective value, and vice versa.

Solving problem $\mathscr{P}$ calls for a joint design of beamforming vectors, transmit power, and phase shifts. As the optimization variables are coupled with each other in constraints \eqref{cons:sinr_downlink} and \eqref{cons:sinr_uplink}, in the next section, we shall propose a BSO approach to effectively decouple the optimization variables.

\section{Block-Structured Optimization Approach} \label{sec: alternating approach}
In this section, we propose a BSO approach to decouple the optimization variables. The main idea of the BSO approach is to partition the variables into several blocks, and alternatively optimize one of the blocks in each iteration while keeping the others fixed \cite{hong2015unified}. Specifically, we partition the five optimization variables into three blocks, denoted as $\B_1=\left(\{\bm{v}_{nk}^{\D}\}, \{\bm{v}_{nk}^{\U}\}, \{p_k^{\U}\}\right), \B_2=\bm{\Theta}^{\D}$, and $\B_3=\bm{\Theta}^{\U}$. The main motivation for such a partition is that for given $\bm{\Theta}^{\U}$ and $\bm{\Theta}^{\D}$, problem $\mathscr{P}$ is reduced to a well-studied joint uplink and downlink power minimization problem.

\subsection{Optimizing Variables \texorpdfstring{$\{\bm{v}_{nk}^{\U}\}, \{p_k^{\U}\}$}{text}, and \texorpdfstring{$\{\bm{v}_{nk}^{\D}\}$}{text}} \label{section: subproblem 1}
When $\B_2$ and $\B_3$ are fixed, problem $\mathscr{P}$ is reduced to the following problem
\begin{eqnarray}\label{prob:sub1 optimize v}
\mathscr{P}_\textrm{1}: &\underset{\{\bm{v}_{nk}^{\D}\}, \{\bm{v}_{nk}^{\U}\}, \{p_k^{\U}\}}{\mathrm{minimize}} ~~&P_{\textrm{total}}(\{p_k^{\U}\}, \{\bm{v}_{nk}^{\D}\})\nonumber\\
&\mathrm{subject~to} & \eqref{cons:sinr_downlink}-\eqref{cons:power_uplink}. \nonumber
\end{eqnarray}
It is observed that $\bm{v}_{nk}^{\U}$ and $\bm{v}_{nk}^{\D}$ are only coupled in the objective function. For the sake of analysis convenience and efficient algorithm design, we temporarily dismiss the indicator function in the objective function and split problem $\mathscr{P}_\textrm{1}$ into two parts, i.e., the downlink part $\mathscr{P}_\textrm{1-1}$ and the uplink part $\mathscr{P}_\textrm{1-2}$.

Specifically, the power minimization problem in the downlink part is 
\begin{eqnarray}
\mathscr{P}_\textrm{1-1}: &&\underset{\{\bm{v}_{nk}^{\D}\}}{\mathrm{minimize}} ~~\sum_{n=1}^{N} \sum_{k=1}^{K}\frac{1}{\eta_n} \left\|\bm{v}_{nk}^{\D} \right\|_2^2\nonumber\\
&&\mathrm{subject~to}~~ \eqref{cons:sinr_downlink},\eqref{cons:power_downlink},\nonumber
\end{eqnarray}
which is a celebrated problem formulation in unicast multiple-input single-output (MISO) systems. Due to the fact that an arbitrary phase rotation of transmit beamforming vectors does not affect the SINR constraints \eqref{cons:sinr_downlink} \cite{wiesel2006linear}, we can replace \eqref{cons:sinr_downlink} with the following second-order cone (SOC) constraint
\begin{eqnarray}\label{cons: downlink soc}
 \sqrt{\sum_{l\neq k}\left|\left(\bm{g}_k^{\D}\right)^{\H}\bm{v}_{l}^{\D}\right|^2+\sigma_{k}^2} \leq \frac{1}{\sqrt{\gamma_k^{\D}}}\mathfrak{R}\left(\left(\bm{g}_k^{\D}\right)^{\H}\bm{v}_{k}^{\D}\right),
\end{eqnarray} 
where 
$
\bm{g}_k^{\D}=\left[\left(\bm{g}_{1k}^{\D}\right)^{\sf{T}},\dots,\left(\bm{g}_{Nk}^{\D}\right)^{\sf{T}}\right]^{\sf{T}}$ and $ \bm{v}_k^{\D}=\left[\left(\bm{v}_{1k}^{\D}\right)^{\sf{T}},\dots,\left(\bm{v}_{Nk}^{\D}\right)^{\sf{T}}\right]^{\sf{T}}$
denote the aggregated channel response vector and transmit beamforming vector with respect to MD $k$, respectively.
Therefore, problem $\mathscr{P}_\textrm{1-1}$ is recast
as a convex second-order cone programming (SOCP), which can be effectively solved by interior-point methods using modern software like CVX \cite{grant2014cvx}.

On the other hand, the power minimization problem in the uplink is given by
\begin{eqnarray}\label{prob:uplink power minimization problem}
\mathscr{P}_\textrm{1-2}: &&\underset{\{\bm{v}_{nk}^{\U}\}, \{p_k^{\U}\}}{\mathrm{minimize}} ~~\sum_{k=1}^{K} p_k^{\U}\nonumber\\
&&\mathrm{subject~to} ~~\eqref{cons:sinr_uplink}, \eqref{cons:power_uplink}.\nonumber
\end{eqnarray}
Although the SINR constraints \eqref{cons:sinr_uplink} are similar to those in the downlink counterpart, we cannot convexify them in a similar way. The main reason is that the phase rotation of the receive beamforming vector $\bm{v}_{nk}^{\U}$ cannot guarantee both numerator and denominator of the uplink $\textrm{SINR}$ expression defined in \eqref{eq: uplink sinr} to be real numbers. Moreover, another issue in $\mathscr{P}_\textrm{1-2}$ is that directly optimizing this problem makes $\bm{v}_{nk}^{\U}$ to be nearly zero, because an arbitrary scaling of $\bm{v}_{nk}^{\U}$ does not affect the uplink SINR constraints \eqref{cons:sinr_uplink} \cite{luo2014downlink}. Specifically, if $\left(\tilde{\bm{v}}_{nk}^{\U}, \tilde{p}_k^{\U}\right)$ denotes the optimal solution to problem $\mathscr{P}_{\textrm{1-2}}$, then we have
\begin{eqnarray}\label{eq: scaling issue}
\tilde{\bm{v}}_{nk}^{\U}\approx \bm{0},~\forall\,n\in\N,~\,k\in\K.
\end{eqnarray}
Although \eqref{eq: scaling issue} does not violate the group sparsity structure of $\bm{v}_{nk}$, it indicates that the receive beamforming vectors do not contribute to the task selection. Based on the uplink-downlink duality, in the following, we shall propose a virtual downlink formulation to overcome the scaling issue.

To facilitate effective algorithm design, we relax problem $\mathscr{P}_\textrm{1-2}$ to the following problem 
\begin{subequations}\label{prob: relaxed uplink primal}
\begin{eqnarray}
\underset{\{\bm{v}_{nk}^{\U}\}, \{p_{nk}\}}{\mathrm{minimize}} && \frac{1}{N}\sum_{n=1}^{N}\sum_{k=1}^{K} p_{nk} \nonumber \\
\mathrm{subject~to} 
&& p_{nk}\leq P_{k,\textrm{max}}^{\U}, ~\forall\,n, \forall\,k, \label{cons:relaxed power uplink}\\
&& \hspace{-8em}\frac{p_{nk}\left|\left(\boldsymbol{v}_{n k}^{\U}\right)^{\mathrm{H}} \boldsymbol{g}_{n k}^{\U}\right|^{2}}{\sum_{l \neq k} p_{nl}\left|\left(\boldsymbol{v}_{n k}^{\U}\right)^{\mathrm{H}} \boldsymbol{g}_{n l}^{\U}\right|^{2}\!+\!\sigma_n^{2}\left\|\boldsymbol{v}_{n k}^{\U}\right\|_{2}^{2}} \geq \gamma_{k}^{\U}, \forall\,n, \forall\,k. \label{cons:relaxed sinr uplink}
\end{eqnarray}
\end{subequations}
We can easily verify that problem \eqref{prob: relaxed uplink primal} is indeed a relaxation to problem $\mathscr{P}_\textrm{1-2}$, because given any solution $\left(\{\bm{v}_{nk}^{\U}\}, \{p_{k}^{\U}\}\right)$ feasible to problem $\mathscr{P}_\textrm{1-2}$, $\left(\{\bm{v}_{nk}^{\U}\}, \{p_{nk}\}\right)$ is also feasible to problem $\eqref{prob: relaxed uplink primal}$ when $p_{nk}=p_{k}^{\U}, \forall\,n\in\N$, and the objective value of problem \eqref{prob: relaxed uplink primal} is no greater than that of $\mathscr{P}_\textrm{1-2}$. The motivation for this relaxation is that problem \eqref{prob: relaxed uplink primal} can be solved in the virtual downlink formulation so as to overcome the scaling issue.

We first consider an ideal scenario that the MDs have unlimited power budgets (i.e., $P_{k,\textrm{max}}^{\U}=+\infty, \forall\,k\in\K$). Problem \eqref{prob: relaxed uplink primal} is then equivalent to the following virtual downlink power minimization problem
%
\begin{subequations}\label{prob:virtual downlink power minimization without mild constraint}
\begin{eqnarray}
\underset{\{\bm{v}_{nk}^{\V}\}}{\mathrm{minimize}} &&\frac{1}{N}\sum_{n=1}^{N}\sum_{k=1}^{K} \left\|\bm{v}_{nk}^{\V}\right\|_2^2 \label{placeholder}\\
\mathrm{subject~to} &&\textrm{SINR}_{nk}^{\V} \geq \gamma_k^{\U}, ~~\forall\,n, \forall\,k, \label{cons:sinr_virtual_downlink}
\end{eqnarray}
\end{subequations}
where $\bm{v}_{nk}^{\V}\in\C^{L_n\times 1}$ denotes the virtual downlink transmit beamforming vector from BS $n$ to MD $k$, and $\textrm{SINR}_{nk}^{\V}$ is the virtual downlink SINR observed by MD $k$ defined as 
\begin{eqnarray}\label{eq:sinr vdl}
\textrm{SINR}_{nk}^{\V} = \frac{\left|\left(\bm{g}_{nk}^{\U}\right)^{\H}\bm{v}_{nk}^{\V}\right|^2}{\sum_{l\neq k}\left|\left(\bm{g}_{nk}^{\U}\right)^{\H}\bm{v}_{nl}^{\V}\right|^2 + \sigma_n^2}~.
\end{eqnarray}
It is noticed that in \eqref{eq:sinr vdl}, the scaling issue no longer exists for $\{\bm{v}_{nk}^{\V}, n\in\N, k\in\K\}$. 
The rigorous proof of the equivalence of the uplink power minimization problem \eqref{prob: relaxed uplink primal} and the virtual downlink problem \eqref{prob:virtual downlink power minimization without mild constraint} can be derived by Lagrangian duality, as in \cite[Theorem 1]{dahrouj2010coordinated}.
The optimal solutions obtained by solving problem \eqref{prob:virtual downlink power minimization without mild constraint} have close connections to solutions to problem \eqref{prob: relaxed uplink primal}, i.e., $\bm{v}_{nk}^{\V} = \bm{v}_{nk}^{\U}$ and $\sum_{n=1}^{N}\sum_{k=1}^{K}  \left\|\bm{v}_{nk}^{\V}\right\|_2^2= \sum_{n=1}^{N}\sum_{k=1}^{K} p_{nk}$. However, it is worth mentioning that the equivalence between the virtual downlink beamforming power and the uplink transmit power does not necessarily hold, i.e., $\left\|\bm{v}_{nk}^{\V}\right\|_2^2 \neq p_{nk}$. Therefore if $P_{k,\textrm{max}}^{\U}<+\infty$, we cannot directly rewrite the transmit power constraints \eqref{cons:relaxed power uplink} as $\|\bm{v}_{nk}^{\V}\|_2^2 \leq P_{k,\textrm{max}}^{\U}, \forall\,n, \forall\,k$ out of intuition and add them to problem \eqref{prob:virtual downlink power minimization without mild constraint}. Instead, we consider a sum-power constraint to relax the uplink transmit power constraints \eqref{cons:relaxed power uplink}, which is given by 
\begin{eqnarray}\label{constraint:relaxed sum transmit power}
	\sum_{n=1}^{N}\sum_{k=1}^{K}  \left\|\bm{v}_{nk}^{\V}\right\|_2^2= \sum_{n=1}^{N}\sum_{k=1}^{K} p_{nk}\leq N\sum_{k=1}^{K}P_{k,\textrm{max}}^{\U}.
\end{eqnarray}
By introducing this mild power control to problem \eqref{prob:virtual downlink power minimization without mild constraint},
we need to solve the following problem
\begin{eqnarray}\label{prob:relaxed uplink power minimization problem} 
&&\underset{\{\bm{v}_{nk}^{\V}\}}{\mathrm{minimize}} ~~\frac{1}{N}\sum_{n=1}^{N}\sum_{k=1}^{K} \left\|\bm{v}_{nk}^{\V}\right\|_2^2\nonumber\\
&&\mathrm{subject~to} ~~\eqref{cons:sinr_virtual_downlink}, \eqref{constraint:relaxed sum transmit power}.
\end{eqnarray}
Similar to \eqref{cons: downlink soc}, constraint \eqref{cons:sinr_virtual_downlink} can be replaced with the following SOC constraint
\begin{eqnarray}
\sqrt{\sum_{l\neq k}\left|\left(\bm{g}_{nk}^{\U}\right)^{\H}\bm{v}_{nl}^{\V}\right|^2+\sigma_n^2} \leq \frac{1}{\sqrt{\gamma_k^{\V}}}\mathfrak{R}\left(\left(\bm{g}_{nk}^{\U}\right)^{\H}\bm{v}_{nk}^{\V}\right) \label{cons: virtual downlink soc}
\end{eqnarray}
to make problem \eqref{prob:relaxed uplink power minimization problem} a convex SOCP problem.

By exploiting the uplink-downlink duality and transforming the uplink model into a virtual downlink model, we address the challenge of the receive beamforming vectors scaling issue. As $\{\bm{v}_{nk}^{\V}\}$ have the same group sparsity pattern as $\{\bm{v}_{nk}^{\U}\}$, combining the downlink and virtual downlink parts, we relax $\mathscr{P}_1$ to the following problem
\begin{eqnarray}\label{prob: relaxed virtual power minimization problem}
&&\underset{\{\bm{v}_{nk}^{\D}\}, \{\bm{v}_{nk}^{\V}\}}{\mathrm{minimize}} ~~\frac{1}{N}\sum_{n=1}^{N}\sum_{k=1}^{K}\left\|\bm{v}_{nk}^{\V}\right\|_2^2+  \sum_{n=1}^{N}\sum_{k=1}^{K}  \frac{1}{\eta_n}\left\|\bm{v}_{nk}^{\D} \right\|_2^2 \nonumber \\
&&  \hspace{6em}+ \sum_{n=1}^{N}\sum_{k=1}^{K}\bm{1}_{\{\bm{\bar{v}}_{nk}=\bm{0}\}}P_{nk}^{\sf{c}}\nonumber\\
&&\mathrm{subject~to}~~\eqref{cons:power_downlink}, \eqref{cons: downlink soc}, \eqref{constraint:relaxed sum transmit power}, \eqref{cons: virtual downlink soc},  
\end{eqnarray}
where $\bm{\bar{v}}_{nk}$ in the objective function is defined as $\bm{\bar{v}}_{nk}=\left[\left(\bm{v}_{nk}^{\V}\right)^{{\sf{T}}}, \left(\bm{v}_{nk}^{\D}\right)^{{\sf{T}}}\right]^{{\sf{T}}}$.
It is observed that all the constraints of problem \eqref{prob: relaxed virtual power minimization problem} are convex. 

Once we solve problem \eqref{prob: relaxed virtual power minimization problem} and obtain the solutions $\left(\{\bm{v}_{nk}^{\D}\}, \{\bm{v}_{nk}^{\V}\}\right)$, the receive beamforming vector $\{\bm{v}_{nk}^{\U}\}$ in $\mathscr{P}_\textrm{1}$ can be obtained by setting $\bm{v}_{nk}^{\U} = \bm{v}_{nk}^{\V}, \forall\,n\in\N, k\in\K$, and $\{p_k^{\U}\}$ can be obtained by solving problem $\mathscr{P}_{\textrm{1-2}}$ with $\{\bm{v}_{nk}^{\U}\}$ fixed, which is reduced to a linear program.

\subsection{Optimizing Variable \texorpdfstring{$\bm{\Theta}^{\D}$}{text} }
For given $\B_1$ and $\B_3$, we optimize $\bm{\Theta}^{\D}$ to solve the resulting problem, which is termed as $\mathscr{P}_\textrm{2}$. Since $\bm{\Theta}^{\D}$ does not appear in the objective function of $\mathscr{P}$, problem $\mathscr{P}_\textrm{2}$ is in fact a downlink feasibility detection problem, which is given by
\begin{eqnarray}\label{prob: optimize theta downlink}
\mathscr{P}_\textrm{2}: &\mathrm{find}~~ &\bm{\Theta}^{\D} \nonumber\\
&\mathrm{subject~to} ~~&\eqref{cons:sinr_downlink}, \eqref{cons:phase_downlink}. \nonumber
\end{eqnarray}
According to the $\textrm{SINR}^{\D}$ expression given in \eqref{sinr_downlink}, we have
\begin{eqnarray}
&&\hspace{-2.2em}\left| \sum_{n=1}^{N}\left(\bm{g}_{nk}^{\D}\right)^{\H} \boldsymbol{v}_{nl}^{\D}\right|^{2}  \nonumber \\
&&\hspace{-3.5em}\stackrel{(a)}{=}\left|\sum_{n=1}^{N}\left(\left(\bm{h}_{\textrm{d},nk}^{\D}\right)^{{\H}}+\left(\bm{h}_{\textrm{r},k}^{\D}\right)^{{\H}}\bm{\Theta}^{\D}\bm{G}_n^{\D}\right)\bm{v}_{nl}^{\D}\right|^2 \nonumber\\
&&\hspace{-3.5em}\stackrel{(b)}{=}\left|\left(\bm{h}_{\textrm{d},k}^{\D}\right)^{{\H}}\bm{v}_l^{\D}+\beta\left(\bm{a}^{\D}\right)^{{\H}}\mathrm{diag}\left(\left(\bm{h}_{\textrm{r},k}^{\D}\right)^{{\H}}\right)\tilde{\bm{G}}^{\D}\bm{v}_l^{\D}\right|^2, \label{eq:b}
\end{eqnarray}
where $(a)$ follows by substituting \eqref{eq: downlink channel}, and $(b)$ holds by defining
\begin{eqnarray}
\bm{h}_{\textrm{d},k}^{\D} = \left[\left(\bm{h}_{\textrm{d},1k}^{\D}\right)^{{\sf{T}}},\dots, \left(\bm{h}_{\textrm{d},Nk}^{\D}\right)^{{\sf{T}}}\right]^{{\sf{T}}},  \bm{a}^{\D}=\left[\theta_1^{\D},\dots, \theta_M^{\D}\right]^{{\H}}, \nonumber
\end{eqnarray}
\vspace{-4ex}
\begin{eqnarray}
\bm{v}_l^{\D}= \left[\left(\bm{v}_{1l}^{\D}\right)^{{\sf{T}}},\dots, \left(\bm{v}_{Nl}^{\D}\right)^{{\sf{T}}}\right]^{{\sf{T}}}, \tilde{\bm{G}}^{\D}=\left[\bm{G}_1^{\D},\dots,\bm{G}_N^{\D}\right]. \nonumber
\end{eqnarray}
Note that in \eqref{eq:b} the only term related to phase shifts is $\bm{a}^{\D}$. Therefore for notational ease, we define 
\vspace{-1ex}
\begin{eqnarray}
\bm{w}_{kl}^{\D} =\beta \mathrm{diag}\left(\left(\bm{h}_{\textrm{r},k}^{\D}\right)^{{\H}}\right)\tilde{\bm{G}}^{\D}\bm{v}_l^{\D}, 
\quad b_{kl}^{\D} =\left(\bm{h}_{\textrm{d},k}^{\D}\right)^{{\H}}\bm{v}_l^{\D},  \nonumber
\end{eqnarray}
and the $\textrm{SINR}^{\D}$ expression in \eqref{sinr_downlink} can be equivalently rewritten as
\begin{eqnarray}
\textrm{SINR}_k^{\D} = \frac{\left|b_{kk}^{\D}+\left(\bm{a}^{\D}\right)^{{\H}}\bm{w}_{kk}^{\D}\right|^2}{\sum_{l \neq k}\left|b_{kl}^{\D}+\left(\bm{a}^{\D}\right)^{{\H}}\bm{w}_{kl}^{\D}\right|^2 + \sigma_{k}^2},
\end{eqnarray}
leading to the following inhomogeneous QCQP problem
\begin{subequations}\label{prob:sub2original}
\begin{eqnarray}
&\mathrm{find} ~&\bm{a}^{\D} \nonumber\\
&\mathrm{subject~ to~} & \left|a_m^{\D}\right|^2=1, \forall\,m,\\
&&\hspace{-4em}\frac{\left|b_{kk}^{\D}+\left(\bm{a}^{\D}\right)^{{\H}}\bm{w}_{kk}^{\D}\right|^2}{\sum_{l \neq k}\left|b_{kl}^{\D}+\left(\bm{a}^{\D}\right)^{{\H}}\bm{w}_{kl}^{\D}\right|^2 + \sigma_{k}^2} \geq \gamma_k, \forall\, k.
\end{eqnarray}
\end{subequations}
By introducing an auxiliary scalar $t$, and defining
\begin{equation}\label{eq:R, a}
\boldsymbol{R}_{kl}^{\D}=\left[ \begin{array}{cc}{\boldsymbol{w}_{kl}^{\D} \left(\boldsymbol{w}_{kl}^{\D}\right)^{{\H}}} & {\boldsymbol{w}_{kl}^{\D} \left(b_{kl}^{\D}\right)^{{\H}}} \\ {\left(\boldsymbol{w}_{kl}^{\D}\right)^{{\H}} b_{kl}^{\D}} & {0}\end{array}\right], \nonumber
\bm{\bar{a}}^{\D}=\left[ \begin{array}{c}{\bm{a}^{\D}} \\ {t}^{\D}\end{array}\right],
\end{equation}
problem \eqref{prob:sub2original} can be transformed into the following homogeneous QCQP problem \cite{wu2018intelligent}
\begin{eqnarray}\label{prob:DL qcqp}
&\mathrm{find} &\bm{\bar{a}}^{\D} \nonumber\\
&\mathrm{subject~to~} &\left|\bar{a}_m^{\D}\right|^2=1, ~~\mathrm{for}~m=1,\dots,M+1,\nonumber \\
& &\hspace{-4.7em}\frac{\left(\bm{\bar{a}}^{\D}\right)^{{\H}}\bm{R}_{kk}^{\D}\bm{\bar{a}}^{\D} + \left|b_{kk}^{\D}\right|^2}{\sum_{l \neq k}\left(\bm{\bar{a}}^{\D}\right)^{{\H}}\bm{R}_{kl}^{\D}\bm{\bar{a}}^{\D} + \left|b_{kl}^{\D}\right|^2+\sigma_{k}^2} \geq \gamma_k, \forall\,k.
\end{eqnarray}

The phase-shift matrix $\bm{\Theta}^{\D}$ can be recovered from $\bm{\bar{a}}^{\D}$ as follows. After solving problem \eqref{prob:DL qcqp} and obtaining a feasible solution $\bm{\bar{a}}^{\D}=\left[\bm{a}_0^{\D}~t_0^{\D}\right]^{{\sf{T}}}$, the solution to problem \eqref{prob:sub2original} can be computed as $\bm{a}^{\D} = \bm{a}_0^{\D}/t_0^{\D}$, and the solution to $\mathscr{P}_\textrm{2}$ is given as $\bm{\Theta}^{\D}=\beta\mathrm{diag}\left( \left(\bm{a}^{\D}\right)^{{\H}}\right)$.

\subsection{Optimizing Variable \texorpdfstring{$\bm{\Theta}^{\U}$}{text}}
As  $\bm{\Theta}^{\U}$ does not appear in the objective function of $\mathscr{P}$, given $\B_1$ and $\B_2$, the resulting problem denoted as $\mathscr{P}_\textrm{3}$ is also a feasibility detection problem
\begin{eqnarray}\label{prob: optimize theta uplink}
\mathscr{P}_\textrm{3}: &\mathrm{find}~~ &\bm{\Theta}^{\U} \nonumber\\
&\mathrm{subject~to} ~~&\eqref{cons:sinr_uplink}, \eqref{cons:phase_uplink}. \nonumber
\end{eqnarray}
The same derivation process presented in the last subsection is also applicable to transform the uplink problem $\mathscr{P}_\textrm{3}$ into a homogeneous QCQP, therefore details are omitted here for brevity. Specifically, by defining
\begin{eqnarray}
\bm{w}_{nkl}^{\U} = \beta \mathrm{diag}\left(\left(\bm{h}_{\textrm{r},l}^{\U}\right)^{{\H}}\right)\bm{G}_n^{\U}\bm{v}_{nk}^{\U}, \quad b_{nkl} = \left(\bm{h}_{\textrm{d},{nl}}^{\U}\right)^{{\H}}\bm{v}_{nk}^{\U},  \nonumber
\end{eqnarray}
\begin{equation}
\boldsymbol{R}_{nkl}^{\U}=\left[ \begin{array}{cc}{\boldsymbol{w}_{nkl}^{\U} \left(\boldsymbol{w}_{nkl}^{\U}\right)^{{\H}}} & {\boldsymbol{w}_{nkl}^{\U} \left(b_{nkl}^{\U}\right)^{{\H}}} \\ {\left(\boldsymbol{w}_{nkl}^{\U}\right)^{{\H}} b_{nkl}^{\U}} & {0}\end{array}\right], ~~ \bm{\bar{a}}^{\U}=\left[ \begin{array}{c}{\bm{a}^{\U}} \\ {t}^{\U}\end{array}\right], \nonumber
\end{equation}
we have the following uplink homogeneous QCQP problem
\begin{eqnarray}\label{prob:UL qcqp}
\mathrm{find}\quad~\,&&\bm{\bar{a}}^{\U} \nonumber\\
\mathrm{subject~to}&&\left|\bar{a}_m^{\U}\right|^2=1, ~~\mathrm{for}~m=1,\dots,M+1, \nonumber\\
&&\hspace{-7em}\frac{p_k^{\U}\left(\left(\bm{\bar{a}}^{\U}\right)^{{\H}}\bm{R}_{nkk}^{\U}\bm{\bar{a}}^{\U} + \left|b_{nkk}^{\U}\right|^2\right)}{\sum_{l \neq k}p_l^{\U}\left(\left(\bm{\bar{a}}^{\U}\right)^{{\H}}\bm{R}_{nkl}^{\U}\bm{\bar{a}}^{\U} + \left|b_{nkl}^{\U}\right|^2\right)+\sigma_n^2\left\|\bm{v}_{nk}^{\U}\right\|_2^2} \geq \gamma_k, \nonumber\\
&& \hspace{13em}\forall\,n, \forall\,k.
\end{eqnarray}

\subsection{A Unified BSO Approach}
Based on the above discussions, problem $\mathscr{P}$ is solved by iteratively solving problems $\mathscr{P}_\textrm{1}$, $\mathscr{P}_\textrm{2}$, and $\mathscr{P}_\textrm{3}$ in an alternating manner until convergence.
We justify the effectiveness and depict the convergence behavior of the proposed BSO approach in the following proposition.
\begin{proposition}\label{proposition: convergence}
	With the BSO approach, the objective value of $\mathscr{P}_\textrm{1}$ is non-increasing in the consecutive iterations.
\end{proposition}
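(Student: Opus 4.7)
My plan is to prove non-increasingness by a standard feasibility-preservation argument for block coordinate descent, where the key is that the two $\bm{\Theta}$-subproblems are pure feasibility problems that leave the incumbent $\mathcal{B}_1$ inside the updated feasible set.

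First I would fix notation. Let $(\mathcal{B}_1^{(t)}, \mathcal{B}_2^{(t)}, \mathcal{B}_3^{(t)})$ denote the iterates at the end of outer iteration $t$, and observe that the objective of $\mathscr{P}_1$ depends only on $\{p_k^{\U}\}$ and $\{\bm{v}_{nk}^{\D}\}$, both contained in $\mathcal{B}_1$; write $f(\mathcal{B}_1)$ for this objective value. Only the feasible set of $\mathscr{P}_1$ depends on $(\mathcal{B}_2, \mathcal{B}_3)$, and it does so in a decoupled way: $\mathcal{B}_2 = \bm{\Theta}^{\D}$ enters constraint \eqref{cons:sinr_downlink} through $\bm{g}_{nk}^{\D}$, while $\mathcal{B}_3 = \bm{\Theta}^{\U}$ enters \eqref{cons:sinr_uplink} through $\bm{g}_{nk}^{\U}$; constraints \eqref{cons:power_downlink}--\eqref{cons:power_uplink} do not depend on the phase shifts at all.

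Next I would trace feasibility across one full sweep of iteration $t$. Solving $\mathscr{P}_1$ with $(\bm{\Theta}^{\D,(t-1)}, \bm{\Theta}^{\U,(t-1)})$ fixed produces $\mathcal{B}_1^{(t)}$ satisfying \eqref{cons:sinr_downlink}--\eqref{cons:power_uplink} against those phases. When $\mathscr{P}_2$ is then solved as a feasibility problem in $\bm{\Theta}^{\D}$ with $\mathcal{B}_1^{(t)}$ fixed, it is nonempty (since $\bm{\Theta}^{\D,(t-1)}$ itself is a feasible point) and its output $\bm{\Theta}^{\D,(t)}$ restores the downlink SINR constraint \eqref{cons:sinr_downlink} jointly with $\mathcal{B}_1^{(t)}$; because the uplink SINR expression is independent of $\bm{\Theta}^{\D}$, $\mathcal{B}_1^{(t)}$ remains uplink-feasible as well. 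The analogous argument for $\mathscr{P}_3$ shows $(\mathcal{B}_1^{(t)}, \bm{\Theta}^{\D,(t)}, \bm{\Theta}^{\U,(t)})$ is jointly feasible for the full problem $\mathscr{P}$.

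The proposition then follows immediately: when $\mathscr{P}_1$ is re-solved at iteration $t+1$ with $(\bm{\Theta}^{\D,(t)}, \bm{\Theta}^{\U,(t)})$ fixed, $\mathcal{B}_1^{(t)}$ is a feasible candidate, so the optimum $\mathcal{B}_1^{(t+1)}$ must satisfy $f(\mathcal{B}_1^{(t+1)}) \le f(\mathcal{B}_1^{(t)})$, yielding the claimed monotonicity.

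I expect the only delicate point to be what ``solving $\mathscr{P}_1$'' means: Section~III-A actually dismisses the indicator penalty $\sum \bm{1}_{\{\bm{v}_{nk} = \bm{0}\}} P_{nk}^{\sf c}$ and works with a virtual-downlink relaxation, so one must check that the $\mathcal{B}_1$ returned is evaluated under the same $f$ in consecutive iterations for the inequality $f(\mathcal{B}_1^{(t+1)}) \le f(\mathcal{B}_1^{(t)})$ to be meaningful. I would address this by stating once that $f$ denotes the surrogate actually minimized in Section~III-A (transmit powers plus scaled beamforming energies), so that the per-iteration inequality holds unambiguously, and note that the indicator term is handled only in the outer reweighted $\ell_{1,2}$ stage, outside the scope of this proposition.
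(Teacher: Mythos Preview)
Your proposal is correct and follows essentially the same argument as the paper: both exploit that the objective depends only on $\mathcal{B}_1$, that the phase-shift updates in $\mathscr{P}_2$ and $\mathscr{P}_3$ are pure feasibility problems which keep $\mathcal{B}_1^{(t)}$ feasible, and that optimality of $\mathcal{B}_1^{(t+1)}$ then gives $f(\mathcal{B}_1^{(t+1)}) \le f(\mathcal{B}_1^{(t)})$. Your feasibility trace across the two phase-shift updates and your closing caveat about the surrogate objective are more detailed than the paper's short proof, but the underlying idea is identical.
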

\begin{proof}
	For ease of notation, we denote the objective value of $\mathscr{P}_\textrm{1}$ as $f(\bm{V}, \bm{\Omega})$, where the first variable $\bm{V}$ is an abstraction of three optimization variables $\{\bm{v}_{nk}^{\D}\}, \{\bm{v}_{nk}^{\U}\},$ and $\{p_k^{\U}\}$ in $\mathscr{P}_\textrm{1}$, and the second variable $\bm{\Omega}$ abstracts phase-shift matrices $\bm{\Theta}^{\D}$ and $\bm{\Theta}^{\U}$ in $\mathscr{P}_\textrm{2}$ and $\mathscr{P}_\textrm{3}$. Assuming that $(\bm{V}^{(t)}, \bm{\Omega}^{(t)})$ is obtained at iteration $t$. If $\mathscr{P}_\textrm{2}$ and $\mathscr{P}_\textrm{3}$ are feasible, i.e., $(\bm{V}^{(t)}, \bm{\Omega}^{(t+1)})$ exists, $(\bm{V}^{(t)}, \bm{\Omega}^{(t+1)})$ is also feasible to $\mathscr{P}_\textrm{1}$. Therefore, $(\bm{V}^{(t)}, \bm{\Omega}^{(t)})$ and $(\bm{V}^{(t+1)}, \bm{\Omega}^{(t+1)})$ are feasible solutions to $\mathscr{P}_\textrm{1}$ at the consecutive iterations $t$ and $t+1$, respectively. We have the following inequality
	\begin{eqnarray}
	\quad f(\bm{V}^{(t+1)}, \bm{\Omega}^{(t+1)}) \stackrel{(a)}{\leq} f(\bm{V}^{(t)}, \bm{\Omega}^{(t+1)})  \stackrel{(b)}{=} f(\bm{V}^{(t)}, \bm{\Omega}^{(t)}), \nonumber
	\end{eqnarray}
	where $(a)$ holds because $\bm{V}^{(t+1)}$ is the optimal solution to $\mathscr{P}_\textrm{1}$ for a given $\bm{\Omega}^{(t+1)}$ at iteration $t+1$ , and $(b)$ holds because the objective value $P_\textrm{total}$ depends only on the value of $\bm{V}$ and is independent of $\bm{\Omega}$.
\end{proof}

Although the BSO addresses the challenge of coupled optimization variables, problems \eqref{prob: relaxed virtual power minimization problem}, \eqref{prob:DL qcqp}, and \eqref{prob:UL qcqp} are still intractable due to the non-convexity of these problems. In the next section, we shall propose effective algorithms to solve these nonconvex problems.

\section{BSO with Mixed $\ell_{1,2}$-norm and DC Based Three-Stage Framework for Network Power Minimization Problem} \label{sec: Framework}
In this section, we first propose two tractable algorithms for problems \eqref{prob: relaxed virtual power minimization problem},  \eqref{prob:DL qcqp}, and \eqref{prob:UL qcqp}, based on the mixed $\ell_{1,2}$ sparsity inducing norm and DC approach, respectively. Following these two algorithms, a three-stage framework is developed for problem $\mathscr{P}_{\textrm{original}}$.
\subsection{Mixed $\ell_{1,2}$-Norm for Group Sparsity Inducing}\label{subsec:1}
Although the feasible set is convex, problem \eqref{prob: relaxed virtual power minimization problem} is a nonconvex integer programming problem due to the indicator function in the objective function. We identify that the third term in the objective function of problem \eqref{prob: relaxed virtual power minimization problem} is a weighted $\ell_{0}$-norm of vector $\bm{\bar{v}} = \left[\bm{\bar{v}}_{11}^{{\sf{T}}},\bm{\bar{v}}_{12}^{{\sf{T}}},\dots,\bm{\bar{v}}_{NK}^{{\sf{T}}}\right]^{{\sf{T}}}$ with weights $\{P_{nk}^{\sf{c}},\,n\in\N\,k\in\K\}$, and it is non-convex. As $\ell_1$-norm is a well-known convex relaxation to $\ell_{0}$-norm, we relax the weighted $\ell_{0}$-norm as 
\begin{eqnarray}\label{eq: mixed l12 relaxation}
&&\sum_{n=1}^{N}\sum_{k=1}^{K}\bm{1}_{\{\bm{\bar{v}}_{nk}=\bm{0}\}}P_{nk}^{\sf{c}} 
\approx \sum_{n=1}^{N}\sum_{k=1}^{K}\left\|\bm{\bar{v}}_{nk}\right\|_2P_{nk}^{\sf{c}} \nonumber \\
&&\hspace{-1em}=\sum_{n=1}^{N}\sum_{k=1}^{K}\sqrt{\|\bm{v}_{nk}^{\D}\|_2^2 + \|\bm{v}_{nk}^{\V}\|_2^2 }P_{nk}^{\sf{c}}.
\end{eqnarray}
Note that \eqref{eq: mixed l12 relaxation} is actually the weighted mixed $\ell_{1,2}$-norm of vector $\bm{\bar{v}}$.
The mixed $\ell_{1,2}$-norm behaves like an $\ell_1$-norm on vector $\left[\|\bm{\bar{v}}_{11}\|_2,\|\bm{\bar{v}}_{12}\|_2, \dots, \|\bm{\bar{v}}_{NK}\|_2\right]$. The outer $\ell_1$-norm induces the sparsity structure, while the inner $\ell_2$-norm is responsible for forcing all coefficients in the beamforming group $\bm{\bar{v}}_{nk}$ to be zero. By adopting mixed $\ell_{1,2}$-norm as the convex relaxation of the indicator function term, we can induce the group sparsity structure of beamforming groups $\{\bm{\bar{v}}_{nk},~n\in\N,\,k\in\K\}$.

\textit{Remark:} In general, all the mixed $\ell_{1,p}$-norm with $p\geq 1$ can induce the group sparsity structure \cite{bach2012optimization}. However for brevity, we adopt the most commonly used mixed $\ell_{1,2}$-norm in this paper.

By replacing the indication function with its convex surrogate, we relax problem \eqref{prob: relaxed virtual power minimization problem} as the following convex problem
\begin{eqnarray}\label{prob: sub1_convex}
&&\underset{\{\bm{v}_{nk}^{\D}\}, \{\bm{v}_{nk}^{\V}\}}{\mathrm{minimize}} ~~\sum_{n=1}^{N}\sum_{k=1}^{K}\|\bm{v}_{nk}^{\V}\|_2^2 +\sum_{n=1}^{N}\sum_{k=1}^{K}\frac{1}{\eta_n} \left\|\bm{v}_{nk}^{\D} \right\|_2^2 \nonumber \\
&& \quad \quad \quad \quad ~+ \sum_{n=1}^{N}\sum_{k=1}^{K}\sqrt{\|\bm{v}_{nk}^{\D}\|_2^2 + \|\bm{v}_{nk}^{\V}\|_2^2 }P_{nk}^{\sf{c}}\nonumber\\
&&\mathrm{subject~to} ~~\eqref{cons:power_downlink}, \eqref{cons: downlink soc}, \eqref{constraint:relaxed sum transmit power}, \eqref{cons: virtual downlink soc}.
\end{eqnarray}

As we have discussed in Section \ref{section: subproblem 1}, the solutions to the non-convex problem $\mathscr{P}_\textrm{1}$ can be obtained by utilizing the solutions to problem \eqref{prob: sub1_convex}. The overall algorithm for solving problem $\mathscr{P}_{\textrm{1}}$ is summarized in Algorithm \ref{algo: sub1}.

As problem $\mathscr{P}_{\textrm{1-2}}$ with fixed $\{\bm{v}_{nk}^{\U}\}$ can be efficiently solved by the fixed-point iterations \cite{wiesel2006linear}, the computational complexity of Algorithm \ref{algo: sub1} is dominated by solving the SOCP problem \eqref{prob: sub1_convex}, which is $\mathcal{O}(L^{3.5}K^{3.5})$ using interior-point methods \cite{boyd2004convex}.

\begin{algorithm}[!h]
	\textbf{Input:} $\bm{\Theta}^{\U}, \bm{\Theta}^{\D}$ \\
	1. solve the convex relaxation problem \eqref{prob: sub1_convex} to obtain $\{\bm{v}_{nk}^{\D}\}, \{\bm{v}_{nk}^{\V}\}$\\
	2. set $\bm{v}_{nk}^{\U} = \bm{v}_{nk}^{\V}$, $\forall\,n\in\N, k\in\K$, and then solve problem $\mathscr{P}_\textrm{1-2}$ with fixed $\{\bm{v}_{nk}^{\U}\}$ to obtain $\{p_k^{\U}\}$\\
	\textbf{Output}: $\{\bm{v}_{nk}^{\D}\}, \{\bm{v}_{nk}^{\U}\}, \{p_k^{\U}\}$
	\caption{Mixed $\ell_{1,2}$-Norm Based Group Sparsity Inducing for Problem $\mathscr{P}_\textrm{1}$}
	\label{algo: sub1}
\end{algorithm}

\subsection{DC Approach for Nonconvex QCQP Problem}\label{subsec:2}
The non-convexity of problems \eqref{prob:DL qcqp} and \eqref{prob:UL qcqp} lie in the unit-modulus constraints. A common technique used to handle the nonconvex QCQP problems is matrix lifting. For ease of notation, we omit the superscripts $\D$ and $\U$ if it does not cause any ambiguity. In problem \eqref{prob:DL qcqp}, as
$
\bm{\bar{a}}^{{\H}}\bm{R}_{kk}\bm{\bar{a}}=\mathrm{Tr}\left(\bm{R}_{kk}\bm{\bar{a}}\bm{\bar{a}}^{{\H}}\right), \nonumber 
$
and by introducing a new variable $\bm{A}=\bm{\bar{a}}\bm{\bar{a}}^{{\H}}\in\mathbb{C}^{(M+1)\times(M+1)}$, we rewrite problem \eqref{prob:DL qcqp} as the following feasibility detection problem 
\begin{subequations}\label{prob:sub2matrixlifting}
\begin{eqnarray}
&\mathrm{find} ~ &\bm{A} \nonumber\\
&\mathrm{subject~to~} &\mathrm{Tr}\left(\bm{R}_{kk}\bm{A}\right) + \left|b_{kk}\right|^2 \geq \gamma_k \sum_{l \neq k}^{K}\mathrm{Tr}\left(\bm{R}_{kl}\bm{A}\right) \nonumber \\
& &~~+ \gamma_k \left(\sum_{l \neq k}^{K}\left|b_{kl}\right|^2 + \sigma_k^2\right), \forall\,k, \label{cons:sub2matrixlifting_part1}  \\
& &\bm{A}_{mm}=1, \mathrm{for}~m=1,\dots,M+1, \label{cons:sub2matrixlifting_part2}\\
& &\bm{A} \succeq \bm{0}~,~\mathrm{rank}\left(\bm{A}\right)=1. \label{cons:sub2matrixlifting_part3}
\end{eqnarray}
\end{subequations}
Here $\bm{A} \succeq \bm{0}$ indicates that $\bm{A}$ is a positive semidefinite (PSD) matrix. The challenge in solving problem \eqref{prob:sub2matrixlifting} is the nonconvex rank-one constraint. Semidefinite relaxation (SDR) technique is widely adopted to tackle the rank-one constraint in QCQP problems \cite{ma2010semidefinite}. By simply dropping the nonconvex rank-one constraint, SDR relaxes the problem into a convex semidefinite programming (SDP) problem, which can then be solved by CVX. If a feasible $\bm{A}$ with rank one is found, then $\bm{\bar{a}}$ can be obtained by singular value decomposition (SVD) of $\bm{A}$.

However, such a relaxation may not be tight, i.e., the solution obtained by SDR may not satisfy the rank-one constraint. As pointed out in \cite{jiang2019over,yang2018federated}, the performance of SDR degrades sharply as the problem size grows. In our case, when $M$ and/or $K$ is large, the probability of returning a rank-one solution is low. If this is the case, additional steps (i.e., Gaussian randomization) are required to construct a rank-one solution from the higher-rank solution obtained by solving problem \eqref{prob:sub2matrixlifting} \cite{wu2018intelligent, ma2010semidefinite}. However, it is still possible that we fail to find a feasible solution to problem \eqref{prob:DL qcqp} after a large number of Gaussian randomizations.

In other words, dropping the rank-one constraint cannot accurately detect the feasibility of problem \eqref{prob:sub2matrixlifting}. Hence, we propose a novel DC representation for the rank-one constraint, which is capable of guaranteeing the feasibility of the nonconvex rank-one constraint and thus detecting the exact feasibility of problem \eqref{prob:sub2matrixlifting}.
Note that for the PSD matrix $\bm{A}$, the rank-one constraint indicates that
\begin{equation*}
\sigma_{1}(\boldsymbol{A}) > 0~\textrm{and}~\sigma_{i}(\boldsymbol{A})=0, ~~\forall\,i=2, \dots, M+1,
\end{equation*}
where $\sigma_{i}(\boldsymbol{A})$ is the $i$-th largest singular value of $\bm{A}$.
And recall that the trace norm and spectral norm of $\bm{A}$ are defined as 
\begin{equation*}
	\mathrm{Tr}(\bm{A})=\sum_{i=1}^{M+1}\sigma_{i}(\bm{A})~~\textrm{and}~~\|\bm{A}\|=\sigma_1(\bm{M}),
\end{equation*}
respectively. Hence, the rank-one constraint can be equivalently rewritten as the difference of these two convex norms, i.e., 
\begin{eqnarray}
\mathrm{rank}(\bm{A})=1 \Longleftrightarrow \mathrm{Tr}(\bm{A}) -\|\bm{A}\| = 0,~\mathrm{Tr}(\bm{A})>0. \label{eq:rank-one}
\end{eqnarray}
Based on the DC representation for the nonconvex rank-one constraint, problem \eqref{prob:sub2matrixlifting} is reformulated as follows
\begin{eqnarray}\label{prob:sub2dc}
&\underset{\bm{A}\succeq \bm{0}}{\mathrm{minimize}} ~ &g(\bm{A}) := \mathrm{Tr}(\bm{A}) -\|\bm{A}\| \nonumber\\
&\mathrm{subject~to~} &\eqref{cons:sub2matrixlifting_part1}-\eqref{cons:sub2matrixlifting_part2}.
\end{eqnarray}
The rank-one constraint is satisfied when the objective value $g(\bm{A})$ becomes zero. Although problem \eqref{prob:sub2dc} is still nonconvex due to the concave term $-\|\bm{A}\|$ in the objective, we can derive a DC algorithm to solve it in an iterative manner. Specifically, by linearizing the concave term, at iteration $i$ we need to solve the following convex problem
\begin{eqnarray}\label{prob:sub2DCA}
&\underset{\bm{A}\succeq \bm{0}}{\mathrm{minimize}} ~ &\mathrm{Tr}(\bm{A}) - \left\langle\partial\|\bm{A}^{[i-1]}\|,\bm{A}\right\rangle \nonumber\\
&\mathrm{subject~to~} &\eqref{cons:sub2matrixlifting_part1}-\eqref{cons:sub2matrixlifting_part2}.
\end{eqnarray}
where $\bm{A}^{[i-1]}$ is the solution obtained at iteration $i-1$ and $\partial\|\bm{A}^{[i-1]}\|$ denotes the subgradient of spectral norm at point $\bm{A}^{[i-1]}$. Note that one subgradient of $\|\bm{A}\|$ can be efficiently computed as $\bm{q}_1\bm{q}_1^{{\H}}$, where $\bm{q}_1$ is the vector corresponding to the largest singular value $\sigma_1(\bm{A})$ \cite{tao1997convex}. By iteratively solving \eqref{prob:sub2DCA} until the objective function $g(\bm{A})$ in \eqref{prob:sub2dc} becomes zero, we obtain an exact rank-one solution according to \eqref{eq:rank-one}. We design a practical stopping criterion as 
$
\mathrm{Tr}(\bm{A})-\|\bm{A}\| < \epsilon_{{DC}},
$
where $\epsilon_{{DC}}$ is a sufficiently small positive constant.

The convergence characteristic of the iterative DC algorithm for problem \eqref{prob:sub2dc} is presented in the following proposition.
\begin{proposition}\label{proposition:dc convergence}
	The generated sequence $\{g(\bm{A}^{[i]})\}$ is strictly decreasing and the sequence $\{\bm{A}^{[i]}\}$ converges to a critical point of $g$ from an arbitrary initial point $\bm{A}^{[0]}$.
\end{proposition}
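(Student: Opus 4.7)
The plan is to apply the standard convergence machinery for the difference-of-convex algorithm (DCA) to the specific functional $g(\bm{A})=\mathrm{Tr}(\bm{A})-\|\bm{A}\|$ over the closed convex feasible set $\mathcal{C}$ defined by $\bm{A}\succeq\bm{0}$ and the constraints \eqref{cons:sub2matrixlifting_part1}--\eqref{cons:sub2matrixlifting_part2}. I would first record the two ingredients that drive everything: (i) $\mathrm{Tr}(\cdot)$ is linear (in particular convex) and $\|\cdot\|$ is convex, so the first-order (subgradient) inequality $\|\bm{A}\|\ge\|\bm{A}^{[i-1]}\|+\langle\partial\|\bm{A}^{[i-1]}\|,\bm{A}-\bm{A}^{[i-1]}\rangle$ holds for every $\bm{A}\in\mathcal{C}$; and (ii) by construction, $\bm{A}^{[i]}$ is the minimizer over $\mathcal{C}$ of the convex surrogate $\mathrm{Tr}(\bm{A})-\langle\partial\|\bm{A}^{[i-1]}\|,\bm{A}\rangle$.

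To prove monotone descent, I would chain these two facts. Using (ii) at $\bm{A}=\bm{A}^{[i-1]}$ gives
\begin{equation*}
\mathrm{Tr}(\bm{A}^{[i]}) - \langle\partial\|\bm{A}^{[i-1]}\|,\bm{A}^{[i]}\rangle \;\le\; \mathrm{Tr}(\bm{A}^{[i-1]}) - \langle\partial\|\bm{A}^{[i-1]}\|,\bm{A}^{[i-1]}\rangle,
\end{equation*}
and using (i) at $\bm{A}=\bm{A}^{[i]}$ gives $-\|\bm{A}^{[i]}\|\le -\|\bm{A}^{[i-1]}\|-\langle\partial\|\bm{A}^{[i-1]}\|,\bm{A}^{[i]}-\bm{A}^{[i-1]}\rangle$. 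Adding the two inequalities and cancelling the $\langle\partial\|\bm{A}^{[i-1]}\|,\cdot\rangle$ terms yields $g(\bm{A}^{[i]})\le g(\bm{A}^{[i-1]})$. For strict decrease until stationarity, I would argue that equality in the above chain forces $\bm{A}^{[i]}$ to attain the same surrogate value as $\bm{A}^{[i-1]}$ and simultaneously saturate the spectral-norm subgradient inequality, which in turn gives $\mathrm{Tr}(\bm{A}^{[i]}) - \langle\partial\|\bm{A}^{[i-1]}\|,\bm{A}^{[i]}\rangle = \mathrm{Tr}(\bm{A}^{[i-1]}) - \|\bm{A}^{[i-1]}\|$; by the optimality condition for \eqref{prob:sub2DCA}, this forces $\bm{A}^{[i-1]}$ itself to be a critical point, so before reaching criticality the decrease is strict.

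For convergence of the iterates, I would first observe that $g$ is bounded below on $\mathcal{C}$ (indeed $g(\bm{A})\ge 0$ for all $\bm{A}\succeq\bm{0}$ since $\mathrm{Tr}(\bm{A})=\sum_i\sigma_i(\bm{A})\ge\sigma_1(\bm{A})=\|\bm{A}\|$), so the monotone sequence $\{g(\bm{A}^{[i]})\}$ has a limit. Second, the unit-diagonal constraints \eqref{cons:sub2matrixlifting_part2} combined with $\bm{A}\succeq\bm{0}$ imply $\|\bm{A}^{[i]}\|_F\le M+1$, so $\mathcal{C}$ is compact and $\{\bm{A}^{[i]}\}$ admits convergent subsequences. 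I would then invoke the optimality condition for the per-iteration convex problem \eqref{prob:sub2DCA}, namely $\bm{I}-\partial\|\bm{A}^{[i-1]}\|+\bm{S}^{[i]}\in\mathcal{N}_{\mathcal{C}}(\bm{A}^{[i]})$ for some $\bm{S}^{[i]}\succeq\bm{0}$ with $\langle\bm{S}^{[i]},\bm{A}^{[i]}\rangle=0$, and pass to the limit along a convergent subsequence, using outer semicontinuity of both the spectral-norm subdifferential and the normal cone $\mathcal{N}_{\mathcal{C}}$. The limit point $\bm{A}^\star$ then satisfies $\bm{I}\in\partial\|\bm{A}^\star\|+\mathcal{N}_{\mathcal{C}}(\bm{A}^\star)$ (modulo the complementary slackness piece), which is exactly the DC-criticality condition $\partial\mathrm{Tr}(\bm{A}^\star)\cap\bigl(\partial\|\bm{A}^\star\|+\mathcal{N}_{\mathcal{C}}(\bm{A}^\star)\bigr)\ne\emptyset$ for $g$.

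The main obstacle I anticipate is upgrading subsequential convergence to full-sequence convergence and rigorously handling the set-valued subdifferential $\partial\|\cdot\|$, which is multi-valued precisely at points where the top singular value is not simple, i.e., exactly the regime where rank reduction happens. A clean way around this is to avoid claiming uniqueness of the limit and instead assert convergence in the standard DCA sense: every limit point is critical, the objective values converge, and the successive differences $\bm{A}^{[i]}-\bm{A}^{[i-1]}$ vanish. The last assertion follows by making the descent inequality quantitative via a strong-convexity term (the $\mathrm{Tr}(\bm{A})$ piece is only linear, so I would add a vanishing proximal perturbation or invoke the strong convexity of the spectral norm modulus inequality) to obtain $g(\bm{A}^{[i-1]})-g(\bm{A}^{[i]})\ge c\|\bm{A}^{[i]}-\bm{A}^{[i-1]}\|_F^2$, after which telescoping together with boundedness of $\{g(\bm{A}^{[i]})\}$ completes the argument.
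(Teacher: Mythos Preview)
The paper does not actually supply its own proof of this proposition; it simply refers the reader to \cite[Appendix B]{yang2018federated}. Your proposal is precisely the standard DCA convergence argument one would expect to find there: the subgradient inequality for the convex piece $\|\cdot\|$ combined with optimality of the surrogate minimizer yields monotone descent, strict until a fixed point (i.e., critical point) is hit; the unit-diagonal constraint together with $\bm{A}\succeq\bm{0}$ gives compactness of the feasible set; and passing to the limit in the per-iteration optimality condition via outer semicontinuity of the subdifferential yields criticality of any cluster point. So as far as approach goes, you are aligned with what the cited reference contains.

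There is one soft spot in your final paragraph. The quantitative descent estimate $g(\bm{A}^{[i-1]})-g(\bm{A}^{[i]})\ge c\|\bm{A}^{[i]}-\bm{A}^{[i-1]}\|_F^2$ that you want for $\bm{A}^{[i]}-\bm{A}^{[i-1]}\to 0$ requires a strong-convexity modulus of at least one DC component. Here $\mathrm{Tr}(\cdot)$ is linear and the spectral norm is convex but \emph{not} strongly convex, so no such $c>0$ is available; your phrase ``strong convexity of the spectral norm modulus inequality'' does not correspond to an actual inequality. Your alternative of inserting a vanishing proximal term is the standard fix in the DCA literature and would work. Without it, what you have rigorously established is subsequential convergence to critical points together with convergence of $\{g(\bm{A}^{[i]})\}$, which is slightly weaker than the proposition's assertion that the full sequence $\{\bm{A}^{[i]}\}$ converges---though in fairness, this is also the level of generality at which the result in the cited reference operates.
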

\begin{proof}
	Please refer to \cite[Appendix B]{yang2018federated} for more details.
\end{proof}

The proposed DC algorithm can always accurately detect the feasibility of problem \eqref{prob:sub2matrixlifting} by obtaining a feasible $\bm{A}$, which guarantees the optimal objective value of problem \eqref{prob:sub2dc} is zero. This is because the feasible region of problem \eqref{prob:sub2matrixlifting} is always non-empty, at least the obtained solution to problem \eqref{prob:sub2matrixlifting} at iteration $t$ (i.e., $\bm{\Theta}^{(t)}$) is still feasible at iteration $t+1$. Therefore, the strictly decreasing and non-negative sequence $\{g(\bm{A}^{[i]})\}$ can always converge to zero within finite steps.

Similarly, we can lift the uplink counterpart \eqref{prob:UL qcqp} as the following feasibility detection problem 
\begin{subequations}\label{prob:DC uplink}
	\begin{eqnarray}
	\hspace{-1em}&\mathrm{find} ~ &\bm{A} \nonumber\\
	\hspace{-1em}&\mathrm{subject~to~} &\mathrm{Tr}\left(\bm{R}_{nkk}\bm{A}\right) + \left|b_{nkk}\right|^2 \geq \gamma_k \sum_{l \neq k}^{K}\alpha_{kl}\mathrm{Tr}\left(\bm{R}_{nkl}\bm{A}\right) \nonumber \\
	\hspace{-1em}& &\hspace{1em}+ \gamma_k \left(\sum_{l \neq k}^{K}\alpha_{kl}\left|b_{nkl}\right|^2 + c_{nk}\right),~~\forall\,n, \forall\,k,   \label{cons:part1}\\
	\hspace{-1em}& &\bm{A}_{mm}=1, \mathrm{for}~m=1,\dots,M+1, \label{cons:part2}\\
	\hspace{-1em}& &\bm{A} \succeq \bm{0}~,~\mathrm{rank}\left(\bm{A}\right)=1,
	\end{eqnarray}
\end{subequations}
where $\alpha_{kl} = p_l^{\U}/p_k^{\U}$ and $c_{nk} = \sigma_n^2\left\|\bm{v}_{nk}^{\U}\right\|_2^2/p_k^{\U}$, and iteratively solve the following DC program 
\begin{eqnarray}\label{prob:sub2DCA_uplink}
&\underset{\bm{A}\succeq \bm{0}}{\mathrm{minimize}} ~ &\mathrm{Tr}(\bm{A}) - \left\langle\partial\|\bm{A}^{[i-1]}\|,\bm{A}\right\rangle \nonumber\\
&\mathrm{subject~to~} &\eqref{cons:part1}-\eqref{cons:part2}.
\end{eqnarray}
until the stopping criterion is satisfied to obtain a feasible solution to problem \eqref{prob:DC uplink}.

The overall algorithm for solving problem $\mathscr{P}_\textrm{2}$ (or $\mathscr{P}_\textrm{3}$) is summarized in Algorithm \ref{algo: sub2}. The complexity of Algorithm \ref{algo: sub2} is dominated by iteratively solving the SDP problem \eqref{prob:sub2DCA} (or problem \eqref{prob:sub2DCA_uplink}) and computing the subgradient of $\bm{A}$. In each iteration, problem \eqref{prob:sub2DCA} is solved with complexity $\mathcal{O}(KM^3)$ (or $\mathcal{O}(NKM^3)$ for problem \eqref{prob:sub2DCA_uplink}) using interior-point methods \cite{boyd2004convex}, while the subgradient can be computed by SVD with complexity $\mathcal{O}(M^3)$. Furthermore, it is observed in our simulations that the convergence rate of the iterative procedure is fast (less than $10$ iterations), therefore the overall complexity of Algorithm \ref{algo: sub2} is at most $\mathcal{O}\left(NKM^3\right)$.

\begin{algorithm}[!h]
	\textbf{Input:} $\{\bm{v}_{nk}^{\D}\}, \{\bm{v}_{nk}^{\V}\}, \{p_k^{\U}\}$ and initial point $\bm{A}^{[0]}$ \\
	1. iteratively solve problem \eqref{prob:sub2DCA} (or problem \eqref{prob:sub2DCA_uplink}) until the stopping criterion $\mathrm{Tr}(\bm{A})-\|\bm{A}\| < \epsilon_{{DC}}$ is satisfied \\
	2. decompose $\bm{A}$ as $\bm{A}=\bm{\bar{a}}\bm{\bar{a}}^{\H}$; denote $\bm{\bar{a}} = \left[\bm{a}_0, t_0\right]^{{\sf{T}}}$; then we obtain $\bm{a} = \bm{a}_0/t_0$ and $\bm{\Theta}=\beta\mathrm{diag}\left(\bm{a}^{{\H}}\right)$
	
	\textbf{Output}: $\bm{\Theta}^{\U}$ (or $\bm{\Theta}^{\D}$)
	\caption{DC Algorithm for Feasibility Detection Problem $\mathscr{P}_\textrm{2}$ (or $\mathscr{P}_\textrm{3}$)}
	\label{algo: sub2}
\end{algorithm}


\subsection{A Three-Stage Framework for Green Edge Inference}\label{subsec:3}
In this subsection, we propose a thorough three-stage framework for the green edge inference problem $\mathscr{P}_{\textrm{original
}}$. In the first stage, we adopt the BSO with mixed $\ell_{1,2}$-norm and DC algorithm to induce the group sparsity structure of uplink/downlink beamforming vectors and optimize the phase-shift matrices.
The obtained solutions serve as a guideline to determine the inference task selection strategy $\A$. In the second stage, an ordering rule is proposed for all tasks according to their priorities, which depend on the structured-sparse beamforming vectors obtained in the first stage as well as several key system parameters  (i.e., channel state information, amplifier efficiency and task computation power). Based on the task ordering, we perform a task selection procedure to finalize $\A$. In the last stage, the beamforming vectors and phase-shift matrices are refined with the finalized task selection strategy $\A$.


\textit{Stage 1. Group Sparsity Inducing:} With randomly initialized phase-shift matrices $\bm{\Theta}^{\U}$ and $\bm{\Theta}^{\D}$, we repeatedly solve problems $\mathscr{P}_\textrm{1}$, $\mathscr{P}_\textrm{2}$, and $\mathscr{P}_\textrm{3}$ based on Algorithms \ref{algo: sub1} and \ref{algo: sub2} respectively in an alternating manner until the following stopping criterion is satisfied: the relative improvement of objective values of problem $\mathscr{P}_\textrm{1}$ defined as $\left(P_{\textrm{total}}^{(t)}-P_{\textrm{total}}^{(t+1)}\right)/P_{\textrm{total}}^{(t)} $ is below a predefined threshold $\varepsilon$, where $P_{\textrm{total}}^{(t)}$ and $P_{\textrm{total}}^{(t+1)}$ represent the objective values obtained in iterations $t$ and $t+1$, respectively. The yielded beamforming vectors should have the group sparsity structures. It is worth mentioning that the relative improvement is expected to be non-negative, because $P_{\textrm{total}}$ is non-increasing as proved in Proposition \ref{proposition: convergence}.

\textit{Stage 2. Inference Task Selection:} The next question is how to determine the task selection strategy $\A$. In fact, it is inappropriate to directly use the beamforming vectors obtained in \textit{Stage 1} to finalize $\A$, as the vectors may contain some very small but nonzero coefficients. As illustrated in \eqref{eq: power}, these nonzero coefficients indicate that the corresponding tasks are performed by the BSs, which may result in unnecessary computation power consumption. To address this issue, we utilize the obtained group-structured beamforming vectors as well as other prior information to provide a guideline to determine set $\A$.

For ease of exposition, we define the set of all task indices as $\{(n,k)|n\in\N,~k\in\K\}$. The task indexed by $(n,k)$ is considered to be \textit{active} if $k\in\A_n$, and \textit{inactive} otherwise. Specifically, the priority of task $(n,k)$ is defined as
\begin{eqnarray}
\tau_{nk}= \sqrt{\frac{\left\|\left[\bm{g}_{nk}^{\U}, \bm{g}_{nk}^{\D}\right]\right\|_2^2\eta_n}{{P}_{nk}^{\sf{c}}}} \left\|\bm{v}_{nk} \right\|_2. \label{eq:order rule}
\end{eqnarray}
We sort all $NK$ tasks in a descending order according to their priorities, i.e., $\tau_{\pi_1}\geq\tau_{\pi_2}\dots\geq\tau_{\pi_{NK}}$, where $\pi$ is a permutation of task indices $(n,k)$'s. Intuitively, if BS $n$ has a higher power amplifier efficiency, a higher channel gain, and a higher beamforming gain with respect to MD $k$, but lower computation power consumption, task $(n,k)$ has a higher priority. A higher $\tau_{nk}$ implies that task $(n,k)$ is more power-efficient, therefore it is more likely to be activated.

To finalize the task selection strategy, we need to detect the feasibility of a sequence of problems 
\begin{eqnarray}\label{prob:revision}
&\mathrm{find}~~&\{\bm{v}_{nk}^{\D}\}, \{\bm{v}_{nk}^{\U}\}, \{p_k^{\U}\} \nonumber \\ 
&\mathrm{subject~to} ~~&\eqref{cons:sinr_downlink}-\eqref{cons:power_uplink}, \nonumber\\
& \quad \quad \quad \quad \quad &\bm{v}_{\pi^{[j]}} = \bm{0},
\end{eqnarray}
where $\pi^{[j]} = \{\pi_{j+1},\dots,\pi_{NK}\}$ denotes the inactive task set at iteration $j$, and $\bm{v}_{\pi^{[j]}} = \bm{0}$ represents that all coefficients in those beamforming groups $\bm{v}_{nk}$'s with index $(n,k)\in {\pi^{[j]}}$ are set to zero. Note that the number of active tasks is within $[K,NK]$. Starting with $j=K$, we terminate the feasibility detection procedure and return $\pi^{[j]}$ until problem $\eqref{prob:revision}$ is feasible. The task selection strategy $\A$ can be easily obtained from $\pi^{[j]}$.
 
Comparing problem \eqref{prob:revision} to $\mathscr{P}_1$, as the added constraint $\bm{v}_{\pi^{[j]}} = \bm{0}$ is convex, we can solve problem \eqref{prob:revision} using Algorithm \ref{algo: sub1}. Details are thus omitted here for brevity.

\textit{Stage 3. Optimization Variables Refinement:} After determining the task selection strategy $\A$, the computation power $\sum_{n\in\N}\sum_{k\in\A_n}P_{nk}^{\sf{c}}$ is a constant and can be removed from the objective function. The uplink receive beamforming vectors and downlink transmit beamforming vectors are determined by coordinated beamforming among the BSs, and the phase-shift matrices at the RIS need to be refined as well. We solve the following problem to refine these optimization variables
\begin{eqnarray}\label{prob:final refine variables}
&\underset{\{\bm{v}_{nk}^{\D}\}, \{\bm{v}_{nk}^{\U}\}, \{p_k^{\U}\}, \bm{\Theta}^{\D}, \bm{\Theta}^{\U}}{\mathrm{minimize}} ~~&\sum_{k=1}^{K}p_k^{\U} +  
\sum_{n=1}^{N} \sum_{k=1}^{K}\frac{1}{\eta_n} \|\bm{v}_{nk}^{\D}\|_2^2\nonumber\\
&\mathrm{subject~to} ~~&\eqref{cons:sinr_downlink}-\eqref{cons:phase_uplink},\nonumber\\
& \quad \quad \quad \quad \quad &\bm{v}_{\pi^{[j]}} = \bm{0}.
\end{eqnarray}
Comparing problem \eqref{prob:final refine variables} to $\mathscr{P}$, since the added constraint $\bm{v}_{\pi^{[j]}} = \bm{0}$ is convex, the BSO with mixed $\ell_{1,2}$-norm and DC algorithm to solve problem $\mathscr{P}$ is also applicable here to obtain the solutions. Details are thus omitted here.

The overall algorithm for green edge inference is summarized in Algorithm \ref{algo:three-stage alternating}. By denoting the required iterations for the BSO to converge as $T$, the computational complexity involved in \textit{Stage 1} is $\mathcal{O}\left(T(L^{3.5}K^{3.5}+NKM^3)\right)$, where $\mathcal{O}(L^{3.5}K^{3.5}+ NKM^3)$ is the complexity at each iteration discussed in Sections \ref{subsec:1} and \ref{subsec:2}. We need to solve a sequence of SOCP problems in \textit{Stage 2}, and hence the worst-case complexity is $\mathcal{O}\left(NK(L^{3.5}K^{3.5})\right)$. The complexity involved in \textit{Stage 3} is the same as that in \textit{Stage 1}. Therefore, the overall complexity for the proposed three-stage framework is $\mathcal{O}\left(T(L^{3.5}K^{3.5}+NKM^3)+NL^{3.5}K^{4.5}\right)$.
\begin{algorithm}[!h]
	\textbf{Input:} initial phase-shift matrices $\bm{\Theta}^{\U}$ and $\bm{\Theta}^{\D}$, and threshold $\varepsilon$ \\
	\textit{Stage 1:} Alternatively optimize $\B_1$, $\B_2$, and $\B_3$\\
	\While{the improvement of the objective in problem $\mathscr{P}_\textrm{1}$ is greater than $\varepsilon$}{1. solve $\mathscr{P}_\textrm{1}$ for $\bm{v}_{nk}^{\U}, p_k^{\U}, \bm{v}_{nk}^{\D}$ using Algorithm \ref{algo: sub1}\\ 2. solve $\mathscr{P}_\textrm{2}$ for $\bm{\Theta}^{\D}$ using Algorithm \ref{algo: sub2} \\
	3. solve $\mathscr{P}_\textrm{3}$ for $\bm{\Theta}^{\U}$ using Algorithm \ref{algo: sub2}} 
	\textit{Stage 2:} Determine the inference task selection \\1. compute task priorities based on \eqref{eq:order rule} and sort all tasks in a descending order according to their priorities\\ 2. iteratively solve problem \eqref{prob:revision} until feasible to finalize the task selection strategy $\A$\\
	\textit{Stage 3:} Solve problem \eqref{prob:final refine variables} to refine variables $\{\bm{v}_{nk}^{\D}\}, \{\bm{v}_{nk}^{\U}\}, \{p_k^{\U}\}, \bm{\Theta}^{\D}, \bm{\Theta}^{\U}$  \\
	\textbf{Output}: $\A$, $\{\bm{v}_{nk}^{\D}\}, \{\bm{v}_{nk}^{\U}\}, \{p_k^{\U}\}, \bm{\Theta}^{\D}, \bm{\Theta}^{\U}$
	\caption{BSO with Mixed $\ell_{1,2}$-Norm and DC Based Three-Stage Framework for Nonconvex Combinatorial Problem $\mathscr{P}_{\textrm{original}}$}
	\label{algo:three-stage alternating}
\end{algorithm}

\section{Simulation Results}\label{sec: simulation}
In this section, we present the simulation results to verify the effectiveness of our proposed algorithm. We consider a network with five $8$-antenna BSs and six MDs uniformly and independently distributed in a square region of 300$\,$m$\times$300\,m. An RIS with $30$ reflecting elements is located at the 3-dimensional coordinate $(150,0,20)$. In addition, the BSs are with height $30\,$m (i.e., the coordinates of the BSs are $(x,y,30)$), while the MDs are with height $0\,$m (i.e., the coordinates of the MDs are $(x,y,0)$).

We consider the following distance-dependent path loss model
$
L(d)=T_{0}\left(\frac{d}{d_{0}}\right)^{-\alpha},
$
where $T_{0}$ is the constant path loss at the reference distance $d_{0}$, $d$ is the Euclidean distance between the transceivers, and $\alpha$ is the path loss exponent. Each antenna of the BSs is assumed to have an isotropic radiation pattern (i.e., 0\,dBi antenna gain), while each element of the RIS has a 3\,dBi gain because it reflects signals only in its front half-space \cite{wu2019joint, griffin2009complete}. Moreover, Rayleigh fading is considered as the small-scale fading model for all channels. Specifically, the channel coefficients are given by
$\bm{h}_{\textrm{d},nk}^{{x}}= \sqrt{L(d_{\textrm{BU}})}\bm{\xi}_1$, $ \bm{G}_n^{x}=\sqrt{10^{0.3}L(d_{\textrm{RB}})}\bm{\Gamma}$, and $\bm{h}_{\textrm{r},k}^{{x}}=\sqrt{10^{0.3}L(d_{\textrm{RU}})}\bm{\xi}_2$, where $\bm{\xi}_1, \bm{\xi}_2\sim\mathcal{CN}(\bm{0},\bm{I})$, $\bm{\Gamma}\sim\mathcal{CN}(\bm{0},\bm{I})$, and superscript ${x}\in\{\U,\D\}$.
Without specified otherwise, we set $T_{0}=-10\,$dB, $d_0=1\,$m,  $P_{n,\textrm{max}}^{\D}=1\,$W, $P_{n,\mathrm{max}}^{\U}=1\,$W, $P_{nk}^{\sf{c}}=0.45\,$W, $\sigma_k^2=-53\,$dBm, $\sigma_n^2=-63\,$dBm, $\eta_n=0.25$, $\beta=1$, $\epsilon_{{DC}}=10^{-6}$,  $\varepsilon=10^{-2}$, and $\gamma_k^{\U}=\frac{1}{2}\gamma_k^{\D}$. The path loss exponent $\alpha$ is set as $2,3.5$, and $2.5$ for BS-RIS channel, BS-MD channel, and RIS-MD channel, respectively.

We compare the proposed BSO with mixed $\ell_{1,2}$-norm and DC algorithm (abbreviated as BSO-$\ell_{1,2}$-DC) with the following benchmarks.
\begin{itemize}
	\item \textbf{Without-RIS}: Without the deployment of an RIS, the equivalent channels in \eqref{eq: uplink channel} and \eqref{eq: downlink channel} contain only the direct link, i.e., $\bm{h}_{\textrm{r},k}^{\U}=\bm{h}_{\textrm{r},k}^{\D}=\bm{0}, \forall\,k$. As we do not need to optimize phase shifts in this case, the alternating process in \textit{Stage 1} is simplified to solve $\mathscr{P}_{1}$ only once.
	\item \textbf{BSO with mixed $\ell_{1,2}$-norm and Random Phase} (abbreviated as BSO-$\ell_{1,2}$-RP): In this case, the phase shifts of all reflecting elements in both uplink and downlink transmissions are randomly chosen from $[0,2\pi)$ and then used to solve problem $\mathscr{P}_{\textrm{1}}$. We do not solve problems $\mathscr{P}_{\textrm{2}}$ and $\mathscr{P}_{\textrm{3}}$ in \textit{Stage 1} subsequently to optimize the phase shifts. This benchmark is designed to reveal the necessity of optimizing the phase-shift matrices.
	\item \textbf{BSO with mixed $\ell_{1,2}$-norm and SDR} (abbreviated as BSO-$\ell_{1,2}$-SDR): In this case, the nonconvex rank-one constraints in \eqref{prob:sub2matrixlifting} and \eqref{prob:DC uplink} are dropped. Gaussian randomization is then adopted to obtain a feasible solution to problems $\mathscr{P}_\textrm{2}$ and $\mathscr{P}_\textrm{3}$. The number of randomly generated vectors for Gaussian randomization is set as $500$. If Gaussian randomization fails to find a feasible solution, we terminate the alternating process in \textit{Stage 1}.
\end{itemize}

\subsection{Effectiveness of Deploying An RIS}
In this subsection, we compare our RIS-aided communication system with the conventional one without the assistance of an RIS. For fair comparison, we do not explicitly optimize the phase-shift matrices, i.e., we compare the performance of Without-RIS and BSO-$\ell_{1,2}$-RP.

We first study the relationship between the feasible probability of problem $\mathscr{P}_{\textrm{original}}$ and the target SINR $\gamma_{k}^{\D}$. The feasible probability of problem $\mathscr{P}_{\textrm{original}}$ is defined as 
\begin{eqnarray}
\mathbb{P}\{\mathscr{P}_{\textrm{original}}\,\textrm{is\,feasible}\}\!=\!\frac{\textrm{number of cases~} \mathscr{P}_{\textrm{original}}~\textrm{is~feasible}}{\textrm{number of total test cases}}. \nonumber
\end{eqnarray}
As the target SINR requirements become more stringent, i.e., larger values of $\gamma_k^{\U}$ and $\gamma_k^{\D}$, the feasibility probability of problem $\mathscr{P}_{\textrm{original}}$ is expected to decline. Results illustrated in Table \ref{table: feasible probability} are averaged over 200 independently generated channel realizations.
We can see that the conventional system without RIS fails to support those settings with a target SINR being higher than 0\,dB, while the RIS-aided system can still support with a high probability. In terms of the maximum SINR that the communication systems can support, we observe that there exists at least a 15\,dB gain of the RIS-aided system over the system without RIS.

Table \ref{table: power consumption} illustrates the overall network power consumption of systems with and without RIS. Under the same SINR requirement, it is observed that BSO-$\ell_{1,2}$-RP yields a significantly lower power consumption. The supreme performance gain demonstrates that the deployment of an RIS in wireless communication systems can greatly boost the SINR and in turn reduce the overall network power consumption.

\begin{table}[!t]
	\centering
	\caption{Feasible Probability versus Target SINR}
	\begin{tabular}{l|c|c|c|c|c|c|c}
		\hline
		Target SINR [dB] & -20 & -15 & -10 & -5 & 0 & 5 & 10\\ \hline
		Without-RIS & 1.00 & 0.84 & 0.48 & 0.08 & 0 & 0 & 0\\ \hline
		BSO-$\ell_{1,2}$-RP & \textbf{1.00} & \textbf{1.00} & \textbf{1.00} & \textbf{1.00} & \textbf{0.97} & \textbf{0.84} & \textbf{0.42}\\ \hline
		
	\end{tabular}
	\label{table: feasible probability}
\end{table}

\begin{table}[!t]
	\centering
	\caption{Overall Power Consumption versus Target SINR}
	\begin{tabular}{l|c|c|c|c|c|c|c}
		\hline
		Target SINR [dB] & -20 & -15 & -10 & -5 & 0 & 5 & 10\\ \hline
		Without-RIS [W] & 3.04 & 3.52 & 5.17 & 9.60 & N/A & N/A & N/A \\ \hline
		BSO-$\ell_{1,2}$-RP [W] & \textbf{3.02} & \textbf{3.12} & \textbf{3.32} & \textbf{3.75} & \textbf{4.72} & \textbf{7.66} & \textbf{14.79}\\ \hline
		
	\end{tabular}
	\label{table: power consumption}
\end{table}

\begin{figure}[!t]
	\centering
	\includegraphics[width=8.5cm,height=6.5cm]{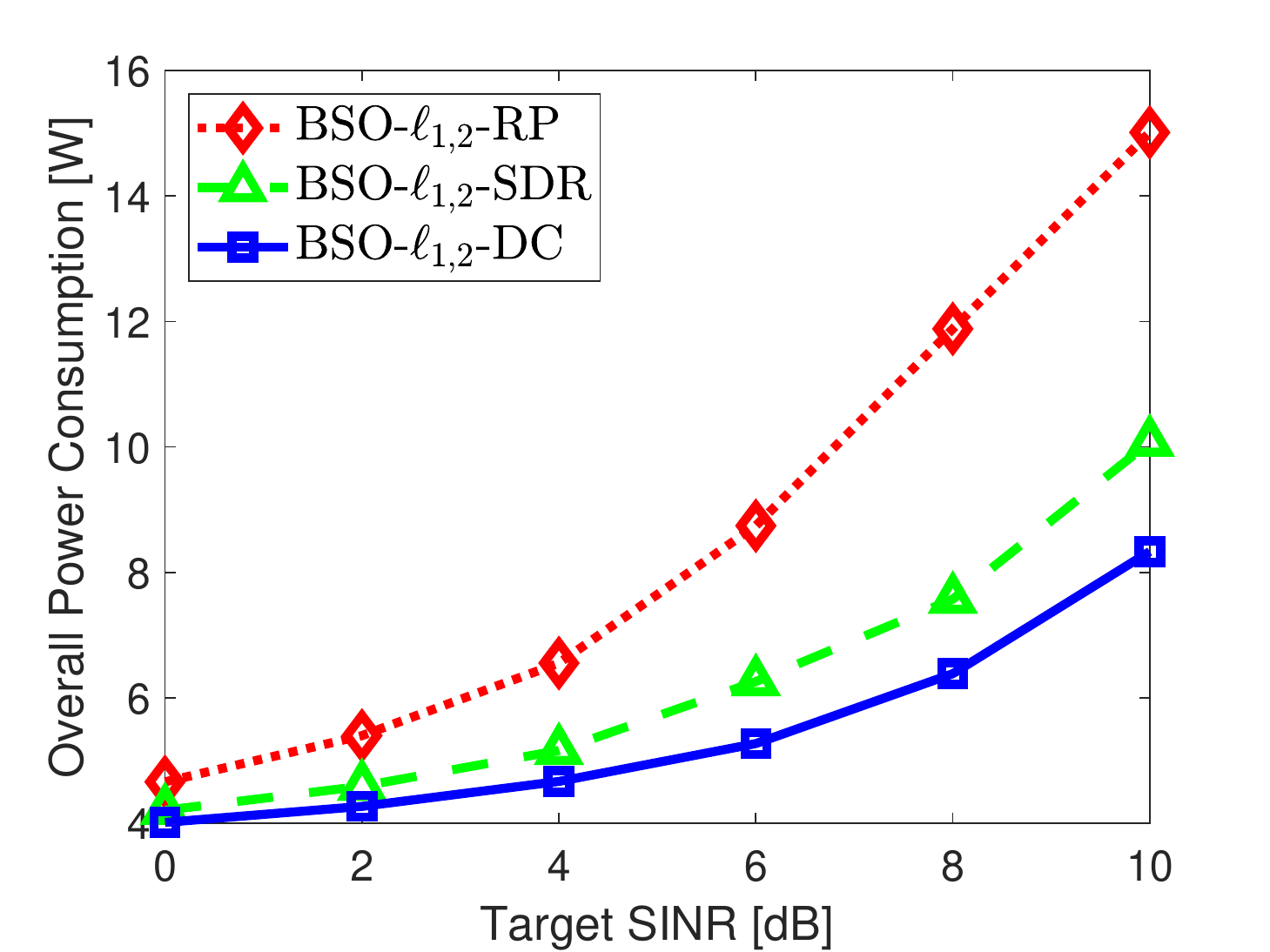}
	\caption{The overall network power consumption versus target SINR $\gamma_{k}^{\D}$.}
	\label{fig: 1}
\end{figure}

\begin{figure}
	\centering
	\includegraphics[width=8.5cm,height=6.5cm]{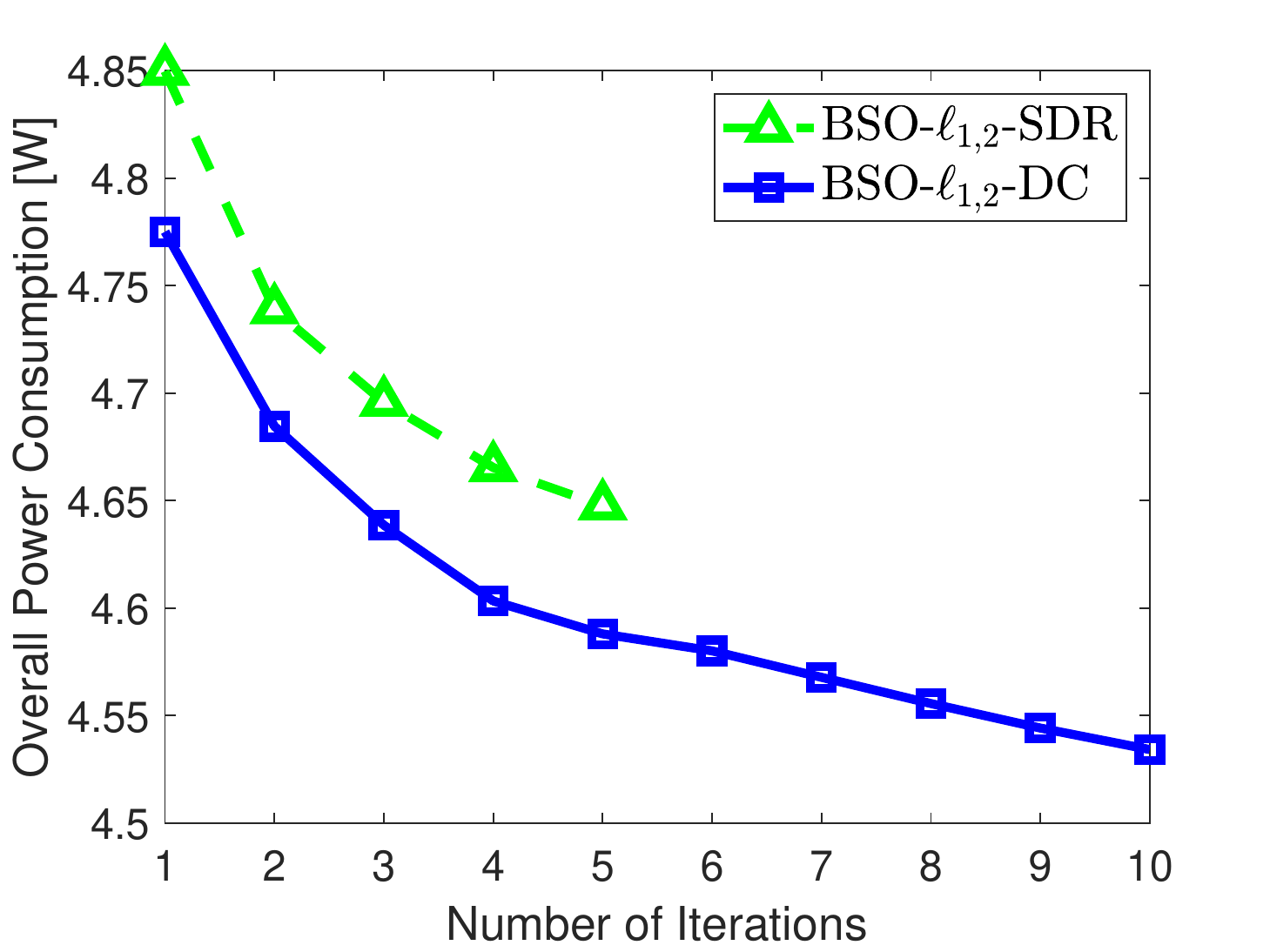}
	\caption{Convergence behaviors of both BSO-$\ell_{1,2}$-DC and BSO-$\ell_{1,2}$-SDR algorithms.}
	\label{fig: convergence}
\end{figure}

\begin{figure}[!t]
	\centering
	\begin{subfigure}{\linewidth}
		\centering
		\includegraphics[width=8.5cm,height=6.5cm]{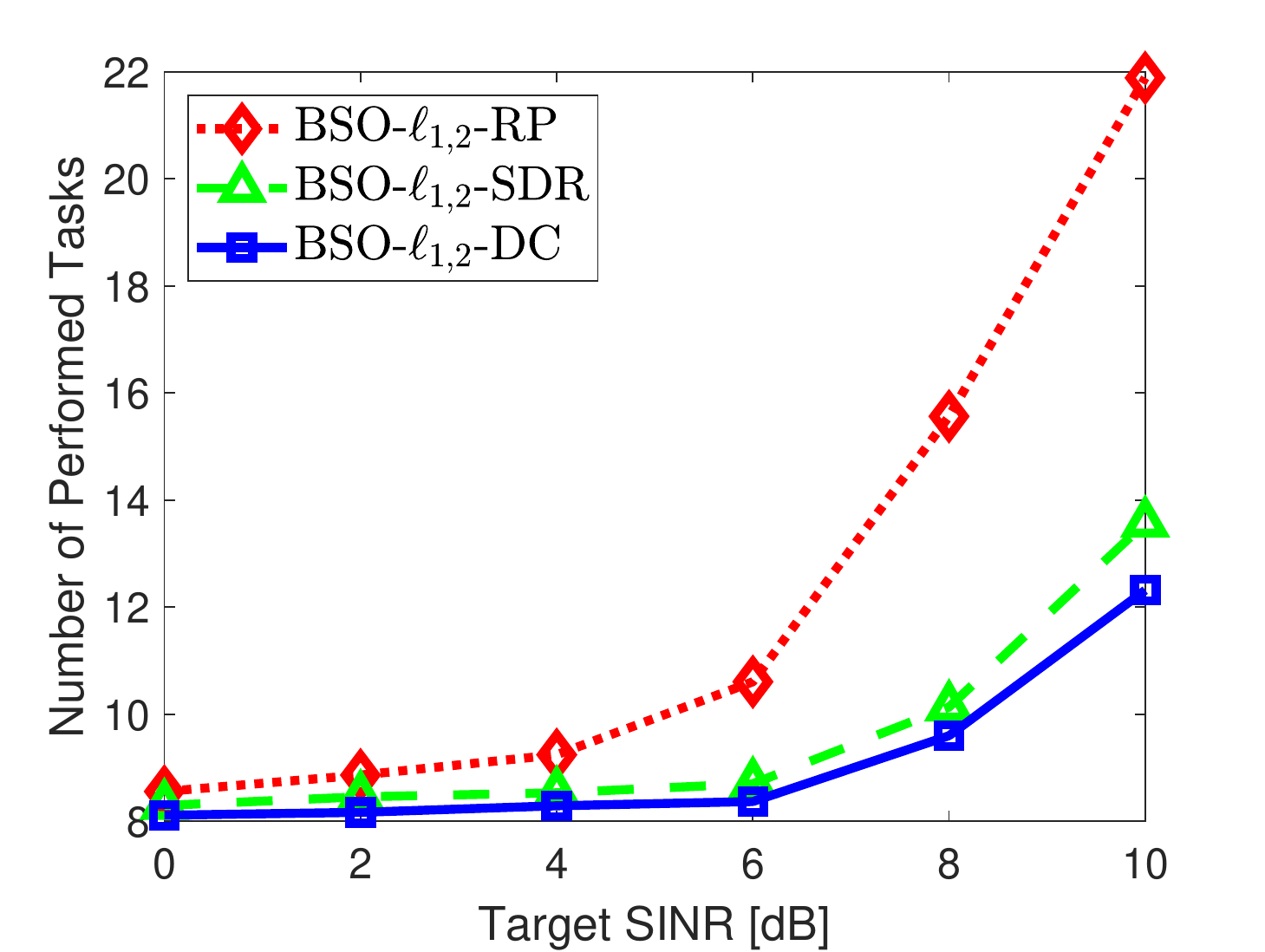}
		\caption{Number of performed tasks versus target SINR.}
		\label{fig:num tasks}
	\end{subfigure}
	\begin{subfigure}{\linewidth}
	\centering
	\includegraphics[width=8.5cm,height=6.5cm]{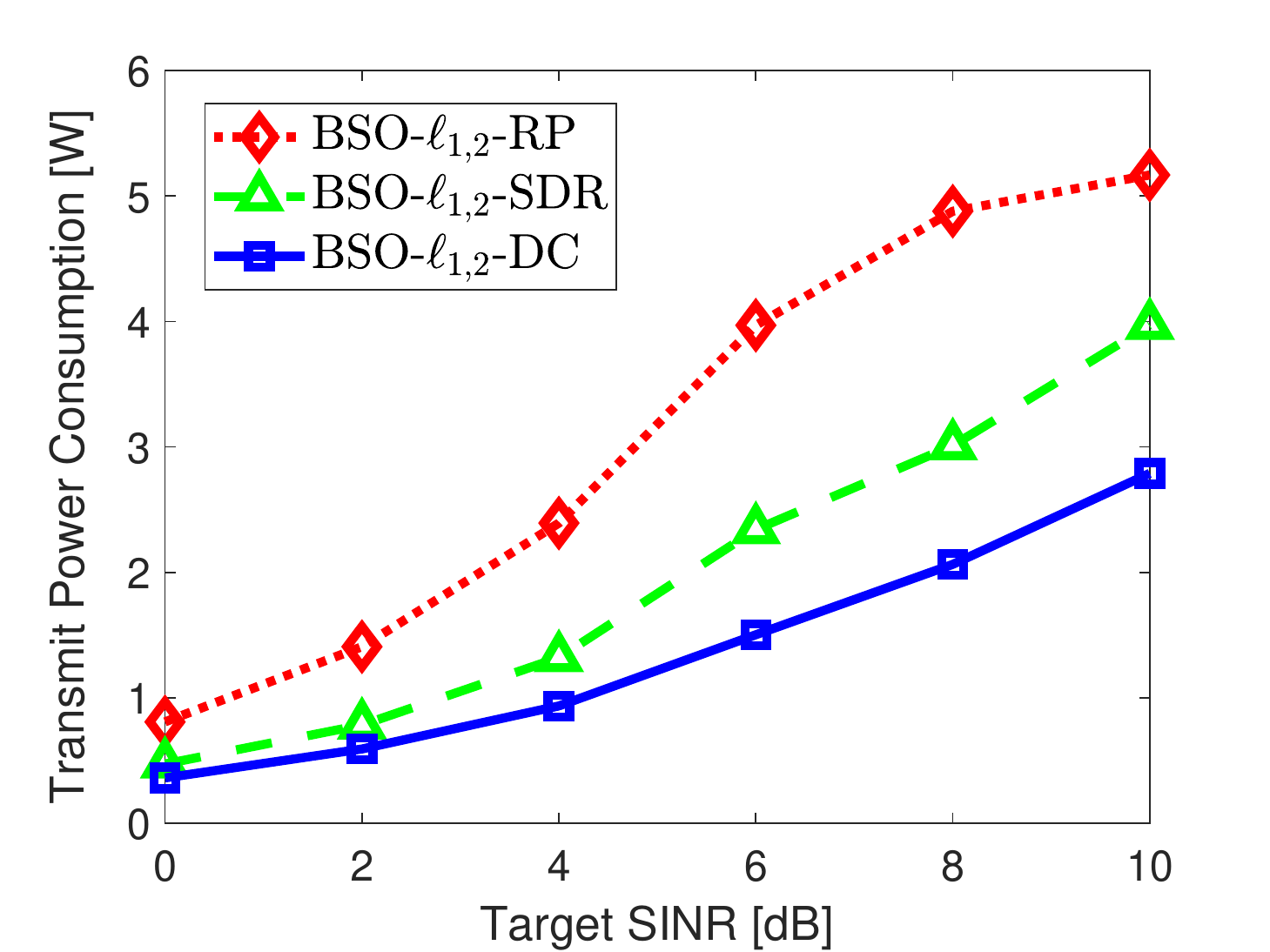}
	\caption{Transmit power consumption versus target SINR.}
	\label{fig:trans power}
	\end{subfigure}
	\caption{Different components of the overall network power consumption versus target SINR $\gamma_{k}^{\D}$.}
	\label{fig: 2}
\end{figure}

\subsection{Superiority of The Proposed BSO-$\ell_{1,2}$-DC Algorithm}
In this subsection, we show the superiority of our proposed BSO-$\ell_{1,2}$-DC algorithm. The overall network power consumption shown in Fig. \ref{fig: 1} is averaged over 150 independent channel realizations.

The first observation is that both the BSO-$\ell_{1,2}$-DC and BSO-$\ell_{1,2}$-SDR algorithms significantly outperform BSO-$\ell_{1,2}$-RP, which demonstrates that dynamically optimizing the phase-shift matrices according to the beamforming vectors can reduce the network power consumption to a large extent. In addition, we observe that the proposed BSO-$\ell_{1,2}$-DC algorithm yields a much lower overall network power consumption and is more energy efficient than the BSO-$\ell_{1,2}$-SDR algorithm. Given an overall power budget (e.g., 6\,W), BSO-$\ell_{1,2}$-DC can achieve around 2\,dB higher SINR for the MDs than BSO-$\ell_{1,2}$-SDR. Such a performance gain is mainly because BSO-$\ell_{1,2}$-SDR may early terminate the alternating BSO process in \textit{Stage 1} and cannot further proceed to find feasible solutions to problems $\mathscr{P}_\textrm{2}$ and/or $\mathscr{P}_\textrm{3}$. 
To make this more explicit, Fig. \ref{fig: convergence} shows the objective values of problem $\mathscr{P}_1$ in the first $10$ alternating iterations in a specific channel realization. It is observed that as the BSO approach proceeds, the overall network power consumption for both BSO-$\ell_{1,2}$-DC and BSO-$\ell_{1,2}$-SDR algorithms are non-increasing, which validates our analysis in Proposition \ref{proposition: convergence}. It is also observed that the BSO-$\ell_{1,2}$-SDR algorithm terminates at the 5th iteration, as SDR fails to obtain feasible solutions to problems $\mathscr{P}_{\textrm{2}}$ and/or $\mathscr{P}_{\textrm{3}}$ even with Gaussian randomization techniques. In contrast, DC can always yield feasible solutions as we have discussed in Section \ref{subsec:2}, and therefore BSO-$\ell_{1,2}$-DC terminates the alternating process only when the consecutive iterations make little progress.

Another interesting point worth mentioning in Fig. \ref{fig: 1} is that the performance gaps between BSO-$\ell_{1,2}$-DC and other algorithms are getting larger as the value of the target SINR increases, which indicates that BSO-$\ell_{1,2}$-DC is especially appealing when high-quality services are required by the MDs. This is because a higher target SINR leads to a narrower feasible region of problems $\mathscr{P}_\textrm{2}$ and $\mathscr{P}_\textrm{3}$, making SDR less likely to find a feasible solution. In short, Fig. \ref{fig: 1} shows that the proposed BSO-$\ell_{1,2}$-DC is able to reduce the overall network power consumption by 20\% in the low SINR regime, and by up to 45\% in the high SINR regime.

The number of tasks performed by all the BSs and the transmit power consumption versus the target SINR are shown in Fig. \ref{fig:num tasks} and Fig. \ref{fig:trans power}, respectively. As the target SINR increases, both the number of performed tasks and transmit power consumption increase. It is observed in Fig. \ref{fig:num tasks} that BSO-$\ell_{1,2}$-DC can always perform fewer tasks to satisfy a certain target SINR. In other words, the long-lasting alternating iterations of BSO-$\ell_{1,2}$-DC shown in Fig. \ref{fig: convergence} helps promote the group sparsity structure of beamforming vectors, thereby achieving lower computation power consumption. In terms of the transmit power consumption depicted in Fig. \ref{fig:trans power}, we make the similar observation that BSO-$\ell_{1,2}$-DC yields the lowest transmit power consumption. Finally, it is also observed that the performance gaps between BSO-$\ell_{1,2}$-DC and other algorithms tend to be larger in the high SINR regime.



%

\section{Conclusions}\label{sec: conclusion}
In this paper, we investigated an RIS-aided edge inference system with multiple BSs cooperatively serving multiple MDs, taking into account both uplink and downlink transmissions. The design of an energy-efficient edge inference system was formulated as a joint uplink and downlink beamforming, transmit power and phase-shift matrices design problem. A BSO approach was proposed to decouple the optimization variables. For efficient algorithm design, mixed $\ell_{1,2}$-norm was adopted to induce group sparsity of uplink/downlink beamforming vectors, while the matrix lifting and DC techniques were exploited to handle the nonconvex rank-one constraint and in turn solve the phase-shift matrix optimization problems. Through numerical simulations, we demonstrated that the deployment of an RIS significantly reduces the overall network power consumption. Furthermore, the effectiveness of DC algorithm for inducing low-rank solutions was verified. We also clarified the convergence behavior of the proposed BSO approach in this paper. For future work, we will study the precise convergence rate of the proposed algorithm.

\bibliographystyle{IEEEtran}
\bibliography{ref.bib}

\begin{thebibliography}{10}
\providecommand{\url}[1]{#1}
\csname url@samestyle\endcsname
\providecommand{\newblock}{\relax}
\providecommand{\bibinfo}[2]{#2}
\providecommand{\BIBentrySTDinterwordspacing}{\spaceskip=0pt\relax}
\providecommand{\BIBentryALTinterwordstretchfactor}{4}
\providecommand{\BIBentryALTinterwordspacing}{\spaceskip=\fontdimen2\font plus
\BIBentryALTinterwordstretchfactor\fontdimen3\font minus
  \fontdimen4\font\relax}
\providecommand{\BIBforeignlanguage}[2]{{%
\expandafter\ifx\csname l@#1\endcsname\relax
\typeout{** WARNING: IEEEtran.bst: No hyphenation pattern has been}%
\typeout{** loaded for the language `#1'. Using the pattern for}%
\typeout{** the default language instead.}%
\else
\language=\csname l@#1\endcsname
\fi
#2}}
\providecommand{\BIBdecl}{\relax}
\BIBdecl

\bibitem{hua2019globecom}
S.~Hua and Y.~Shi, ``Reconfigurable intelligent surface for green edge
  inference in machine learning,'' in \emph{Proc. IEEE Global Commun. Conf.
  (Globecom) Workshops}, Waikoloa, Hawaii, Dec. 2019.

\bibitem{CiscoReport}
``{Cisco visual networking index: Global mobile data traffic forecast update,
  20172022},'' Available:
  \url{https://www.cisco.com/c/en/us/solutions/collateral/service-provider/visual-networking-index-vni/white-paper-c11-738429.pdf},
  Feb. 2019.

\bibitem{letaief2019roadmap}
K.~B. {Letaief}, W.~{Chen}, Y.~{Shi}, J.~{Zhang}, and Y.~A. {Zhang}, ``The
  roadmap to {6G}: {AI} empowered wireless networks,'' \emph{{IEEE} Commun.
  Mag.}, vol.~57, no.~8, pp. 84--90, Aug. 2019.

\bibitem{chen2019deeplearning}
J.~{Chen} and X.~{Ran}, ``Deep learning with edge computing: A review,''
  \emph{Proc. {IEEE}}, vol. 107, no.~8, pp. 1655--1674, Aug. 2019.

\bibitem{zhang2019iov}
J.~{Zhang} and K.~B. {Letaief}, ``Mobile edge intelligence and computing for
  the internet of vehicles,'' \emph{Proc. {IEEE}, early access}, 2019.

\bibitem{liu2018demand}
S.~Liu, Y.~Lin, Z.~Zhou, K.~Nan, H.~Liu, and J.~Du, ``On-demand deep model
  compression for mobile devices: A usage-driven model selection framework,''
  \emph{in Proc. ACM MobiSys}, pp. 389--400, Munich, Germany, Jun. 2018.

\bibitem{taylor2018adaptive}
B.~Taylor, V.~S. Marco, W.~Wolff, Y.~Elkhatib, and Z.~Wang, ``Adaptive deep
  learning model selection on embedded systems,'' \emph{ACM SIGPLAN Notices},
  vol.~53, no.~6, pp. 31--43, 2018.

\bibitem{du2015shidiannao}
Z.~Du, R.~Fasthuber, T.~Chen, P.~Ienne, L.~Li, T.~Luo, X.~Feng, Y.~Chen, and
  O.~Temam, ``Shidiannao: Shifting vision processing closer to the sensor,''
  \emph{ACM SIGARCH Computer Architecture News}, vol.~43, no.~3, pp. 92--104,
  2015.

\bibitem{shi2016edge}
W.~Shi, J.~Cao, Q.~Zhang, Y.~Li, and L.~Xu, ``Edge computing: Vision and
  challenges,'' \emph{{IEEE} Internet Things J.}, vol.~3, no.~5, pp. 637--646,
  Oct. 2016.

\bibitem{zhou2019edge}
Z.~{Zhou}, X.~{Chen}, E.~{Li}, L.~{Zeng}, K.~{Luo}, and J.~{Zhang}, ``Edge
  intelligence: Paving the last mile of artificial intelligence with edge
  computing,'' \emph{Proc. {IEEE}}, vol. 107, no.~8, pp. 1738--1762, Aug. 2019.

\bibitem{li2019edge}
E.~{Li}, L.~{Zeng}, Z.~{Zhou}, and X.~{Chen}, ``Edge {AI}: On-demand
  accelerating deep neural network inference via edge computing,'' \emph{IEEE
  Trans. Wireless Commun., early access}, 2019.

\bibitem{hu2019dynamic}
C.~{Hu}, W.~{Bao}, D.~{Wang}, and F.~{Liu}, ``Dynamic adaptive {DNN} surgery
  for inference acceleration on the edge,'' in \emph{Proc. IEEE {INFOCOM}},
  Paris, France, Apr. 2019.

\bibitem{cvpr_2017_yang_energy}
T.-J. Yang, Y.-H. Chen, and V.~Sze, ``Designing energy-efficient convolutional
  neural networks using energy-aware pruning,'' in \emph{Proc. IEEE {CVPR}},
  Honolulu, Hawaii, Jul. 2017.

\bibitem{louizos2017bayesian}
C.~Louizos, K.~Ullrich, and M.~Welling, ``Bayesian compression for deep
  learning,'' in \emph{Proc. NeurIPS}, Long Beach, California, Dec. 2017.

\bibitem{yang2019energy}
K.~Yang, Y.~Shi, W.~Yu, and Z.~Ding, ``Energy-efficient processing and robust
  wireless cooperative transmission for edge inference,'' \emph{arXiv preprint
  arXiv:1907.12475}, 2019.

\bibitem{basar2019wireless}
E.~{Basar}, M.~{Di Renzo}, J.~{De Rosny}, M.~{Debbah}, M.~{Alouini}, and
  R.~{Zhang}, ``Wireless communications through reconfigurable intelligent
  surfaces,'' \emph{IEEE Access}, vol.~7, pp. 116\,753--116\,773, Aug. 2019.

\bibitem{di2019smart}
M.~Di~Renzo, M.~Debbah, D.-T. Phan-Huy, A.~Zappone, M.-S. Alouini, C.~Yuen,
  V.~Sciancalepore, G.~C. Alexandropoulos, J.~Hoydis, H.~Gacanin \emph{et~al.},
  ``Smart radio environments empowered by reconfigurable {AI} meta-surfaces: An
  idea whose time has come,'' \emph{EURASIP J. Wireless Commu. Netw.}, vol.
  2019, no.~1, pp. 1--20, Jan. 2019.

\bibitem{liaskos2018new}
C.~Liaskos, S.~Nie, A.~Tsioliaridou, A.~Pitsillides, S.~Ioannidis, and
  I.~Akyildiz, ``A new wireless communication paradigm through
  software-controlled metasurfaces,'' \emph{IEEE Commun. Mag.}, vol.~56, no.~9,
  pp. 162--169, Sept. 2018.

\bibitem{wu2018intelligent}
Q.~{Wu} and R.~{Zhang}, ``Intelligent reflecting surface enhanced wireless
  network via joint active and passive beamforming,'' \emph{IEEE Trans.
  Wireless Commun.}, vol.~18, no.~11, pp. 5394--5409, Nov. 2019.

\bibitem{huang2019reconfigurable}
C.~{Huang}, A.~{Zappone}, G.~C. {Alexandropoulos}, M.~{Debbah}, and C.~{Yuen},
  ``Reconfigurable intelligent surfaces for energy efficiency in wireless
  communication,'' \emph{IEEE Trans. Wireless Commun.}, vol.~18, no.~8, pp.
  4157--4170, Aug. 2019.

\bibitem{nadeem2019asymptotic}
Q.-U.-A. Nadeem, A.~Kammoun, A.~Chaaban, M.~Debbah, and M.-S. Alouini,
  ``Asymptotic analysis of large intelligent surface assisted {MIMO}
  communication,'' \emph{arXiv preprint arXiv:1903.08127}, 2019.

\bibitem{huang2019holographic}
C.~Huang, S.~Hu, G.~C. Alexandropoulos, A.~Zappone, C.~Yuen, R.~Zhang, M.~D.
  Renzo, and M.~Debbah, ``Holographic {MIMO} surfaces for {6G} wireless
  networks: Opportunities, challenges, and trends,'' \emph{arXiv preprint
  arXiv:1911.12296}, 2019.

\bibitem{han2019intelligent}
H.~Han, J.~Zhao, D.~Niyato, M.~Di~Renzo, and Q.-V. Pham, ``Intelligent
  reflecting surface aided network: Power control for physical-layer
  broadcasting,'' \emph{arXiv preprint arXiv:1910.14383}, 2019.

\bibitem{fu2019Intelligent}
\BIBentryALTinterwordspacing
M.~Fu, Y.~Zhou, and Y.~Shi, ``Intelligent reflecting surface for downlink
  non-orthogonal multiple access networks,'' in \emph{in Proc. IEEE Globecom
  Workshops}, Waikoloa, Hawaii, Dec. 2019. [Online]. Available:
  \url{https://arxiv.org/abs/1906.09434}
\BIBentrySTDinterwordspacing

\bibitem{li2019joint}
Y.~Li, M.~Jiang, Q.~Zhang, and J.~Qin, ``Joint beamforming design in
  multi-cluster {MISO NOMA} intelligent reflecting surface-aided downlink
  communication networks,'' \emph{arXiv preprint arXiv:1909.06972}, 2019.

\bibitem{wu2019joint}
Q.~Wu and R.~Zhang, ``Joint active and passive beamforming optimization for
  intelligent reflecting surface assisted {SWIPT} under {QoS} constraints,''
  \emph{arXiv preprint arXiv:1910.06220}, 2019.

\bibitem{li2019exploiting}
K.~Li, M.~Tao, and Z.~Chen, ``Exploiting computation replication for mobile
  edge computing: A fundamental computation-communication tradeoff study,''
  \emph{arXiv preprint arXiv:1903.10837}, 2019.

\bibitem{mao2017survey}
Y.~Mao, C.~You, J.~Zhang, K.~Huang, and K.~B. Letaief, ``A survey on mobile
  edge computing: The communication perspective,'' \emph{IEEE Commun. Surveys
  Tut.}, vol.~19, no.~4, pp. 2322--2358, Apr. 2017.

\bibitem{krizhevsky2012imagenet}
A.~Krizhevsky, I.~Sutskever, and G.~E. Hinton, ``Imagenet classification with
  deep convolutional neural networks,'' in \emph{Proc. {NeurIPS}}, Lake Tahoe,
  USA, Dec. 2012.

\bibitem{gesbert2010multi}
D.~Gesbert, S.~Hanly, H.~Huang, S.~S. Shitz, O.~Simeone, and W.~Yu,
  ``Multi-cell {MIMO} cooperative networks: A new look at interference,''
  \emph{{IEEE} J. Sel. Areas Commun.}, vol.~28, no.~9, pp. 1380--1408, Dec.
  2010.

\bibitem{yang2017method}
T.-J. Yang, Y.-H. Chen, J.~Emer, and V.~Sze, ``A method to estimate the energy
  consumption of deep neural networks,'' in \emph{Proc. IEEE ACSSC},
  California, USA, Oct. 2017.

\bibitem{xu2018scaling}
X.~Xu, Y.~Ding, S.~X. Hu, M.~Niemier, J.~Cong, Y.~Hu, and Y.~Shi, ``Scaling for
  edge inference of deep neural networks,'' \emph{Nature Electronics}, vol.~1,
  no.~4, pp. 216--222, 2018.

\bibitem{mahapatra2015energy}
R.~Mahapatra, Y.~Nijsure, G.~Kaddoum, N.~U. Hassan, and C.~Yuen, ``Energy
  efficiency tradeoff mechanism towards wireless green communication: A
  survey,'' \emph{{IEEE} Commun. Surveys Tuts.}, vol.~18, no.~1, pp. 686--705,
  Jan. 2015.

\bibitem{EstimationWebsite}
``{CNN Energy Estimation Website},'' \url{http://energyestimation.mit.edu}.

\bibitem{chen2016eyeriss}
Y.-H. Chen, T.~Krishna, J.~S. Emer, and V.~Sze, ``Eyeriss: An energy-efficient
  reconfigurable accelerator for deep convolutional neural networks,''
  \emph{{IEEE} J. Solid-State Circuits}, vol.~52, no.~1, pp. 127--138, Jan.
  2016.

\bibitem{hwang2013holistic}
I.~Hwang, B.~Song, and S.~S. Soliman, ``A holistic view on hyper-dense
  heterogeneous and small cell networks,'' \emph{{IEEE} Commun. Mag.}, vol.~51,
  no.~6, pp. 20--27, Jun. 2013.

\bibitem{hong2015unified}
M.~Hong, M.~Razaviyayn, Z.-Q. Luo, and J.-S. Pang, ``A unified algorithmic
  framework for block-structured optimization involving big data: With
  applications in machine learning and signal processing,'' \emph{{IEEE} Signal
  Process. Mag.}, vol.~33, no.~1, pp. 57--77, Jan. 2016.

\bibitem{wiesel2006linear}
A.~Wiesel, Y.~C. Eldar, and S.~Shamai, ``Linear precoding via conic
  optimization for fixed {MIMO} receivers,'' \emph{IEEE Trans. Signal
  Process.}, vol.~54, no.~1, pp. 161--176, Jan. 2006.

\bibitem{grant2014cvx}
M.~Grant and S.~Boyd, ``{CVX}: Matlab software for disciplined convex
  programming, version 2.1,'' \url{http://cvxr.com/cvx}, 2014.

\bibitem{luo2014downlink}
S.~Luo, R.~Zhang, and T.~J. Lim, ``Downlink and uplink energy minimization
  through user association and beamforming in {C-RAN},'' \emph{IEEE Trans.
  Wireless Commun.}, vol.~14, no.~1, pp. 494--508, Jan. 2015.

\bibitem{dahrouj2010coordinated}
H.~Dahrouj and W.~Yu, ``Coordinated beamforming for the multicell multi-antenna
  wireless system,'' \emph{IEEE Trans. Wireless Commun.}, vol.~9, no.~5, pp.
  1748--1759, May 2010.

\bibitem{bach2012optimization}
F.~Bach, R.~Jenatton, J.~Mairal, G.~Obozinski \emph{et~al.}, ``Optimization
  with sparsity-inducing penalties,'' \emph{Foundations Trends Mach. Learning},
  vol.~4, no.~1, pp. 1--106, 2012.

\bibitem{boyd2004convex}
S.~Boyd and L.~Vandenberghe, \emph{Convex optimization}.\hskip 1em plus 0.5em
  minus 0.4em\relax Cambridge University Press, 2004.

\bibitem{ma2010semidefinite}
Z.-Q. Luo, W.-K. Ma, M.-C. So, Y.~Ye, and S.~Zhang, ``Semidefinite relaxation
  of quadratic optimization problems and applications,'' \emph{IEEE Signal
  Process. Mag.}, vol.~27, no.~3, pp. 20--34, May 2010.

\bibitem{jiang2019over}
\BIBentryALTinterwordspacing
T.~Jiang and Y.~Shi, ``Over-the-air computation via intelligent reflecting
  surfaces,'' in \emph{Proc. IEEE Globecom Workshops}, Waikoloa, Hawaii, Dec.
  2019. [Online]. Available: \url{https://arxiv.org/abs/1904.12475}
\BIBentrySTDinterwordspacing

\bibitem{yang2018federated}
K.~Yang, T.~Jiang, Y.~Shi, and Z.~Ding, ``Federated learning via over-the-air
  computation,'' \emph{IEEE Trans. Wireless Commun., early access}, 2019.

\bibitem{tao1997convex}
P.~D. Tao and L.~T.~H. An, ``Convex analysis approach to {DC} programming:
  theory, algorithms and applications,'' \emph{Acta mathematica vietnamica},
  vol.~22, no.~1, pp. 289--355, Jan. 1997.

\bibitem{griffin2009complete}
J.~D. Griffin and G.~D. Durgin, ``Complete link budgets for backscatter-radio
  and {RFID} systems,'' \emph{IEEE Antennas and Propagation Mag.}, vol.~51,
  no.~2, pp. 11--25, Apr. 2009.

\end{thebibliography}

\nolinenumbers
\end{document}